\newtheorem{lemma}{Lemma}[section]
\newtheorem{theorem}[lemma]{Theorem}
\newtheorem{corollary}[lemma]{Corollary}
\newtheorem{definition}[lemma]{Definition}
\begin{document}

\title{Approximation Algorithm for Minimum Weight
$(k,m)$-CDS Problem in Unit Disk Graph}
\author{\footnotesize Yishuo Shi$^1$ \quad Zhao Zhang$^1$\quad Ding-Zhu Du$^2$\\
    {\it\small $^1$ College of Mathematics and Computer Science, Zhejiang Normal University}\\
    {\it\small Jinhua, Zhejiang, 321004, China.}\\
    {\it\small $^2$ Department of Computer Science, University of Texas at Dallas}\\
    {\it\small Richardson, Texas, 75080, USA}}
\date{}
\maketitle

\begin{abstract}
In a wireless sensor network, the virtual backbone plays an important role. Due to accidental damage or energy depletion, it is desirable that the virtual backbone is fault-tolerant. A fault-tolerant virtual backbone can be modeled as a $k$-connected $m$-fold dominating set ($(k,m)$-CDS for short). In this paper, we present a constant approximation algorithm for the minimum weight $(k,m)$-CDS problem in unit disk graphs under the assumption that $k$ and $m$ are two fixed constants with $m\geq k$. Prior to this work, constant approximation algorithms are known for $k=1$ with weight and $2\leq k\leq 3$ without weight. Our result is the first constant approximation algorithm for the $(k,m)$-CDS problem with general $k,m$ and with weight. The performance ratio is $(\alpha+2.5k\rho)$ for $k\geq 3$ and $(\alpha+2.5\rho)$ for $k=2$, where $\alpha$ is the performance ratio for the minimum weight $m$-fold dominating set problem and $\rho$ is the performance ratio for the subset $k$-connected subgraph problem (both problems are known to have constant performance ratios.)

\vskip 0.2cm {\bf Keyword}: wireless sensor network, fault tolerant, connected dominating set, approximation algorithm.
\end{abstract}

\section{Introduction}

Sensors have been applied for collecting data with many kinds of purposes,
such as safety protection system, environment monitoring, manufacture process management,
healthcare etc. Especially, sensors are important sources of big data.

Usually, a sensor has a small energy storage and a limited ability to process data. Hence, they have to share information through multihop transimissions. Every sensor has a transmission range. Any device located within
this range is able to receive data sent by the sensor. Such a mechanism enables sensors
to form a wireless communication network, which is often required to be connected.

Different from wired network, wireless network does not have a prefixed infrastructure. Instead, virtual backbone is used in the implementation of network operations,
such as broadcast, multicast, and unicast. A virtual backbone is a subset of network nodes and
every network operation request can be reduced to a corresponding operation on
the virtual backbone. Such a reduction plays an important role in saving storage and energy.
Besides, through such a reduction,
it is easy to design algorithm agreement and analyze the complexity of implementation. To achieve such a goal, there are two requirements for
every virtual backbone. The first is that every network node is adjacent with a virtual backbone node, so that it can communicate with the virtual backbone. Secondly, the set of backbone nodes
should be connected, so that information can be shared in the virtual backbone (and thus the whole network combining with the first requirement). To meet these requirements, a virtual backbone can be modeled as a connected dominating set (CDS).

Given a graph $G = (V, E)$, a {\em dominating set} (DS) is a subset $D$ of $V$ such
that each node $v$ in $V \setminus D$ is adjacent with at least one node of $D$.
A node in $D$ adjacent with $v$ is called a {\em dominator} of $v$. A dominating
set $D$ is a {\em connected dominating set} (CDS) if the subgraph of $G$ induced
by $D$, denoted as $G[D]$, is connected.

Because the energy of a sensor is supplied by
battery, if the battery is depleted, then the sensor can no longer work.
When sensors are deployed in hostile environment, charging or recharging
batteries is impossible. This brings us issues of fault-tolerance and
energy efficiency.

In practice, due to energy depletion and accidental damage, it is desirable
that the virtual backbone is fault-tolerant, in the sense that it can
still work when some backbone nodes fail. This consideration leads to
the concept of {\em minimum $k$-connected $m$-fold dominating
set problem} (abbreviated as $(k,m)$-MCDS), the goal of which is to find a minimum
node set $D$ such that every node in $V\setminus D$ has at least $m$-neighbors
in $D$ and the subgraph $G[D]$ is $k$-connected.

In many applications, different sensors have different significance which leads to different weights on different nodes. In such a setting, it is desirable to find a virtual backbone with minimum weight instead of minimum cardinality. Also, minimum weight sensor cover plays an important role for the study of maximum lifetime problem in a wireless sensor network \cite{Berman}. In fact, by a result of Garg and K\"{o}nemann \cite{Garg}, if the minimum weight sensor cover problem has a $\rho$-approximation, then the maximum lifetime problem will have a $(\rho+\varepsilon)$-approximation.

Motivated by these considerations, we study
the {\em minimum weight $k$-connected $m$-fold dominating
set problem} (abbreviated as $(k,m)$-MWCDS), the goal of which is to find a node set $D$ which is a $(k,m)$-CDS with the minimum total weight.

In this paper, we consider a homogeneous wireless sensor network, which means that all sensors have the same transmission range. We assume that
each sensor is equipped with an omnidirectional antenna with
transmission radius being one. Hence its transmission range
is a disk with radius one centered at this sensor.
Two sensors can communicate with each other if and only if they fall into
the transmission ranges of each other.
Therefore,
the communication network of such a wireless sensor network can be formulated as a {\em unit disk graph} in
which all nodes lie in the Euclidean plane and an edge exists between two nodes
if and only if their Euclidean distance is at most one.

In this paper, we present the first constant approximation algorithm for $(k,m)$-MWCDS in unit disk graphs, where $m,k$ are two fixed integers with $m\geq k\geq 2$.

\subsection{Related Work}

The concept of virtual backbone in a wireless sensor network was first proposed by Das and Bharghavan \cite{Das}.
This motivates the study of minimum connected dominating set (MCDS) in graphs,
especially in unit disk graphs.

It is known that MCDS is NP--hard even in unit disk graphs \cite{Clark}.
Moreover, MCDS cannot be polynomial-time approximated within a
factor of $(1-\varepsilon) \ln n$ for any $\varepsilon>0$ in general graphs \cite{Guha}.

For the MCDS problem, using partition method, Cheng {\it et al.} \cite{Cheng} gave a polynomial-time approximation
scheme (PTAS) in unit disk graphs.
By essentially the same method but different analysis, Zhang {\it et al.} \cite{Zhang}
obtained a PTAS in unit ball graph (a generalization of unit disk graph to higher dimensional space).
These are both centralized algorithms.

As to distributed algorithms for MCDS, Wan {\it et al.}
\cite{Wan} were the first to propose a constant approximation. Their algorithm has
performance ratio $2(mis(n)-1)$, where $mis(n)$ is the maximum number of
independent points (points with mutual distance greater than one) in the union of $n$
unit disks which induce a connected unit disk graph. It is easy to obtain an upper bound $4n+1$ for $mis(n)$. After a series of improvements
\cite{Funke,Gao,LiM,Wan1,WuW}, the current best upper bound for $mis(n)$ is
$3.399n+4.874$ \cite{LiM}.

The weighted version MWCDS is much more difficult.
The first constant approximation algorithm for MWCDS in unit disk graphs was proposed by Amb\"{u}hl {\it et al.} \cite{Ambuhl}. Their performance ratio is $89$, which consists of a $72$-approximation
for the minimum weight dominating set problem (MWDS) in unit disk graphs and a $17$-approximation
for the connecting part. The step stone for their $72$-approximation for MWDS is the
observation that the minimum weight {\em strip outside cover} problem (in which points
in a strip are to be covered by unit disks whose centers are outside of the strip) can
be solved in polynomial time by dynamic programming. Huang {\it et al.} \cite{Huang}
reduce the ratio for MWDS from $72$ to $(6+\varepsilon)$ by introducing a new
technique called ``double partition'', and reduce the ratio for the connecting part
from $17$ to $4$, making use of a minimum weight spanning tree in an auxiliary weighted
complete graph. Later, the ratio for MWDS was further improved to $(5+\varepsilon)$ by
Dai and Yu \cite{DaiYu}, to $(4+\varepsilon)$ by Zou {\it et al.} \cite{Zou} and
independently Erlebach and Mihal\'{a}k \cite{Erlebach}, to $(3+\varepsilon)$ by Willson and Zhang {\it et al.} \cite{Willson,ZhangTON}. Very recently, a PTAS was obtained by Li and Jin \cite{LiJ}. For the connecting part,
Zou {\it et al.} \cite{Zou2} gave a $2.5\rho_0$-approximation, where $\rho_0$ is the performance ratio for the minimum Steiner tree problem. Using currently best known ratio $\rho_0=1.39$ by Byrka {\it et al.} in paper \cite{Byrka}, the performance ratio for the connecting part is $3.475$.

For a better comprehensive study on MCDS and  MWCDS, the readers may refer to the book \cite{DuBookCDS} or chapters \cite{Blum,DuH} in Handbook of Combinatorial Optimization.

The study on fault-tolerant virtual backbone was initiated by Dai and Wu \cite{Dai}.
They presented three localized heuristic algorithms for $(k,k)$-MCDS, but no analysis on the
performance ratio was given. Wang {\it et al.} \cite{Wang} provided a $72$--approximation
for $(2,1)$--MCDS in unit disk graphs. Their strategy is to first find a connected dominating set and then increase its
connectivity to two by adding paths connecting different blocks (a block is a subgraph
without cut nodes and is maximal with respect to this property). The crucial point to
the performance ratio is that for a connected dominating set in a unit disk graph, there always exists a path between
different blocks with at most eight internal nodes. In the case $m\geq 2$, Shang {\it et al.}
\cite{Shang} gave an $\alpha_m$--approximation for $(2,m)$--MCDS in unit disk graphs, where
$\alpha_m=15+ \frac{15}{m}$ for $2\leq m\leq 5$ and $\alpha_m=21$ for $m>5$.
Their algorithm first finds a $(1,m)$-CDS and then augments the connectivity to two.
A key observation is that in the case $m\geq 2$, there always exists a path between
different blocks with at most two internal nodes. Wang {\it et al.} \cite{Wangw}
gave the first constant approximation for $(3,m)$--MCDS in unit disk graphs, which was further improved in \cite{WangWei2}.

There is also some  work on $(k,m)$--MCDS for general $k$ and $m$, in a unit disk graph
\cite{Li,Wu} or even in a disk graph \cite{Thai} (which models a heterogeneous wireless
sensor network). However, for $k\geq 4$, whether there exists a constant approximation algorithm for $(k,m)$--MCDS on unit disk graph is still unknown.

For $(k,m)$-MCDS in general graphs, Zhou {\it et al.} \cite{Zhou} presented a
$\beta_1$-approximation for $(1,m)$-MCDS, where $\beta_1=2+H(\bigtriangleup+m-2)$ and $H(\gamma)=\sum_{i=1}^\gamma 1/i$ is the Harmonic number. Shi {\it et al.} \cite{Shi} presented
a $\beta_2=(\beta_1+2(1+\ln\beta_1))$-approximation for $(2,m)$-MCDS with $m\geq 2$. When applied to unit disk graphs, this algorithm reduces previous ratio in \cite{Shang} by more than half. Zhang {\it et al.} \cite{ZhangInfocom} obtained a $(\beta_2+8+2\ln(2\beta_2-6))$-approximation for $(3,m)$-CDS with $m\geq 3$. When applied to unit disk graphs, this algorithm reduces previous ratio in \cite{WangWei2} from more than 62 to less than 27. Since $H(\gamma)\approx\ln\gamma+0.577$, these algorithms have performance ratio $\ln \Delta+o(\ln\Delta)$ for general graphs. In view of the
inapproximability of this problem \cite{Guha}, these ratios are asymptotically best possible. For $k>1$, there is no
work on weighted version of $(k,m)$-MCDS in general graphs.

\subsection{Our Contributions}

Recall that for $k\geq 4$, whether $(k,m)$--MCDS on unit disk graphs has a constant approximation is still unknown, even in the simpler case without weight. In this paper, we answer this open problem confirmatively by presenting a constant approximation algorithm for $(k,m)$--MWCDS (with weight), where $m,k$ are two fixed integers with $m\geq k$. The algorithm is executed in two steps: First it finds an $m$--fold dominating set $D$. Then, it computes a $k$--connected subgraph $F$ containing $D$. To realize the second step, we design an approximation algorithm for the minimum node-weighted $k$--connected Steiner
network problem (MNW$k$CSN) in which the terminal set is an $m$-fold dominating set with $m\geq k$. By implementing an approximation algorithm for the subset $k$--connected subgraph problem (SkCS), we prove that for unit disk graphs, our algorithm for the special MNW$k$CSN problem has performance ratio $2.5k\rho$ when $k\geq 3$ and $2.5\rho$ when $k=2$, where $\rho$ is the performance ratio for SkCS.
Combining these two steps together, our algorithm for $(k,m)$-MWCDS has performance ratio
$(\alpha+2.5k\rho)$ when $k\geq 3$ and $(\alpha+2.5\rho)$ when $k=2$, where $\alpha$ is the performance ratio for the minimum weight $m$--fold dominating set problem.

In \cite{Fukunage}, Fukunage obtained an $O(1)$--approximation for the minimum weight $m$-fold dominating set problem which is valid for any positive integer $m$. As to SkCS,
the best known ratio is $\rho=O(k^{2}\log k)$ due to Nutov \cite{Nutov} when $k\geq 3$ and $\rho=2$ when $k=2$ due to Fleischer \cite{Fleischer}. So our algorithm has constant
performance ratio for fixed integers $m$ and $k$ with $m\geq k$. Although our analysis makes use of a lot of geometry, the execution of our algorithm does not need a geometric representation of the unit disk graph on the plane.

The remainder of this paper is organized as follows. Section \ref{sec2} presents the
algorithm together with its performance ratio analysis. Section \ref{sec3} concludes
the paper and proposes some future work.

\section{Approximation algorithm for $(k,m)$--MWCDS}\label{sec2}

\begin{definition}[$k$--connected $m$--fold dominating set (($k,m$)--CDS)]\label{def1}
Given a graph $G=(V, E)$, two positive integers $k$ and $m$,
 and a cost function $c: V \rightarrow \mathbb R^{+}$,
 a node subset $D\subseteq V$ is a $(k,m)$--CDS if

$(a)$ every node in $V \setminus D$ is adjacent with at least $m$ nodes of $D$, and

$(b)$ the subgraph of $G$ induced by $D$ is $k$--connected.

\noindent The minimum weight $k$--connected $m$--fold dominating set problem, abbreviated as $(k,m)$--MWCDS, is to find a $(k,m)$--CDS $D$ with $c(D)=\sum_{v\in D}c(v)$  minimized. In particular, if $c \equiv 1$, then we have the unweighted  minimum $k$--connected $m$--fold dominating set problem, abbreviated as $(k,m)$--MCDS.
\end{definition}

In this section, we shall design a constant approximation algorithm for $(k,m)$--MWCDS, where $m\geq k$. In Subsection \ref{subsec3.1}, a geometric property for unit disk graph is obtained, showing that every $k$--connected unit disk graph has a $k$--connected spanning subgraph whose maximum degree is upper bounded by a constant (related to $k$). In Subsection \ref{subsec3.2}, we present an approximation algorithm for the minimum node-weighted $k$-connected Steiner network problem in unit disk graphs in which the terminal set is an $m$-fold dominating set with $m\geq k$. The algorithm for $(k,m)$-MWCDS is presented in Subsection \ref{subsec3.3}.

\subsection{$k$--Connected Spanning Subgraph of Unit Disk Graph}\label{subsec3.1}
In the following, we always assume that the unit disk graph $G$ is embedded on the plane, and the {\em length} of an edge $uv$ is the Euclidean length of line segment $uv$, denoted as $\|uv\|$. The length of a subgraph $F$ of $G$ is $len(F)=\sum_{e\in E(F)}\|e\|$.

In \cite{Holberg}, Holberg gave a kind of decomposition of $k$--connected graphs. We describe it in the following, using language which is consistent with this paper.

Let $G=(V,E)$ be a simple graph. For a node subset $S\subseteq V$, the subgraph of $G$ induced by $S$ is denoted as $G[S]$. We shall use $K_S$ to denote a complete graph on node set $S$. For a connected graph $G$, a node set $S$ is a {\em separator} of $G$ if $G-S$ is disconnected. A separator of size $k$ is called a {\em $k$-separator}. Since parallel edges and loops do not affect the vertex connectivity, we always assume that the graph under consideration is simple. Hence when a multi-graph is created by some operation, redundant edges are removed to keep the graph to be simple.

\begin{definition}[$S$-component and marked $S$-component]
{\rm Let $G$ be a $k$-connected graph, $S$ be a $k$-separator of $G$, and $C$ be a connected component of $G-S$. The subgraph $G[C\cup S]$ is called an {\em $S$-component} of $G$, and the graph $G[C\cup S]\cup K_{S}$ is called a {\em marked $S$-component}.} where $K_{S}$ is a complete graph of $|S|$ vertices.
\end{definition}

\begin{figure*}[htp]
\begin{center}
\hskip -0.5cm
\begin{picture}(80,140)
\put(10,10){\circle*{5}}\put(70,10){\circle*{5}}\put(40,46){\circle*{5}}
\put(10,60){\circle*{5}}\put(70,60){\circle*{5}}
\put(10,90){\circle*{5}}\put(70,90){\circle*{5}}\put(40,110){\circle*{5}}
\qbezier(10,10)(40,10)(70,10)\qbezier(10,90)(40,90)(70,90)
\qbezier(10,10)(10,50)(10,90)\qbezier(70,10)(70,50)(70,90)
\qbezier(10,60)(40,35)(70,10)\qbezier(70,60)(55,85)(40,110)
\qbezier(40,46)(25,53)(10,60)\qbezier(40,46)(55,53)(70,60)
\qbezier(10,60)(40,75)(70,90)\qbezier(10,90)(49,75)(70,60)
\qbezier(40,110)(25,100)(10,90)\qbezier(40,110)(55,100)(70,90)
\put(-3,57){$u_1$}\put(74,57){$u_2$}\put(74,7){$u_3$}\put(-3,7){$u_4$}
\put(36,52){$u_5$}\put(-3,87){$u_6$}\put(74,87){$u_7$}\put(38,115){$u_8$}
\put(36,-5){(a)}
\put(90,63){\vector(4,3){40}}\put(90,60){\vector(1,0){40}}\put(90,57){\vector(4,-3){40}}
\end{picture}
\hskip 2cm\begin{picture}(80,140)
\put(10,10){\circle*{5}}\put(70,10){\circle*{5}}
\put(10,40){\circle*{5}}\put(70,40){\circle*{5}}
\qbezier(10,10)(40,10)(70,10)\qbezier(10,40)(40,25)(70,10)
\qbezier(10,10)(10,25)(10,40)\qbezier(70,10)(70,25)(70,40)
{\linethickness{0.25mm}\qbezier[13](10,40)(40,40)(70,40)}
\put(-3,37){$u_1$}\put(74,37){$u_2$}\put(74,7){$u_3$}\put(-3,7){$u_4$}

\put(10,66){\circle*{5}}\put(70,66){\circle*{5}}\put(40,50){\circle*{5}}
\qbezier(40,50)(25,58)(10,66)\qbezier(40,50)(55,58)(70,66)
{\linethickness{0.25mm}\qbezier[13](10,66)(40,66)(70,66)}
\put(-3,63){$u_1$}\put(63,57){$u_2$}\put(36,55){$u_5$}\put(85,58){$B_2$}

\put(10,80){\circle*{5}}\put(70,80){\circle*{5}}
\put(10,104){\circle*{5}}\put(70,104){\circle*{5}}\put(40,120){\circle*{5}}
\qbezier(10,80)(10,92)(10,104)\qbezier(70,80)(70,92)(70,104)
\qbezier(10,80)(40,92)(70,104)\qbezier(70,80)(40,92)(10,104)
\qbezier(40,120)(25,112)(10,104)\qbezier(40,120)(55,112)(70,104)
\qbezier(40,120)(55,100)(70,80)\qbezier(10,104)(40,104)(70,104)
{\linethickness{0.25mm}\qbezier[13](10,80)(40,80)(70,80)}
\put(-3,77){$u_1$}\put(74,77){$u_2$}\put(-3,101){$u_6$}\put(74,101){$u_7$}\put(44,120){$u_8$}
\put(85,90){$B_1$}

\put(36,-5){(b)}
\put(90,15){\vector(1,0){40}}\put(90,18){\vector(3,2){40}}
\end{picture}
\hskip 2cm\begin{picture}(80,140)
\put(10,10){\circle*{5}}\put(60,10){\circle*{5}}\put(10,30){\circle*{5}}
\qbezier(10,10)(35,10)(60,10)\qbezier(10,30)(35,20)(60,10)\qbezier(10,10)(10,20)(10,30)
\put(-3,27){$u_1$}\put(64,7){$u_3$}\put(-3,7){$u_4$}\put(75,12){$B_4$}

\put(60,40){\circle*{5}}\put(10,60){\circle*{5}}\put(60,60){\circle*{5}}
\qbezier(10,60)(35,50)(60,40)\qbezier(60,40)(60,50)(60,60)
{\linethickness{0.25mm}\qbezier[13](10,60)(35,60)(60,60)}
\put(-3,57){$u_1$}\put(64,57){$u_2$}\put(64,27){$u_3$}\put(75,42){$B_3$}

\put(36,-5){(c)}
\end{picture}
\end{center}
\vskip 0.3cm
\caption{An illustration of marked components and the $2$-block decomposition of a 2-connected graph.}\label{fig16-3-30-1}
\end{figure*}
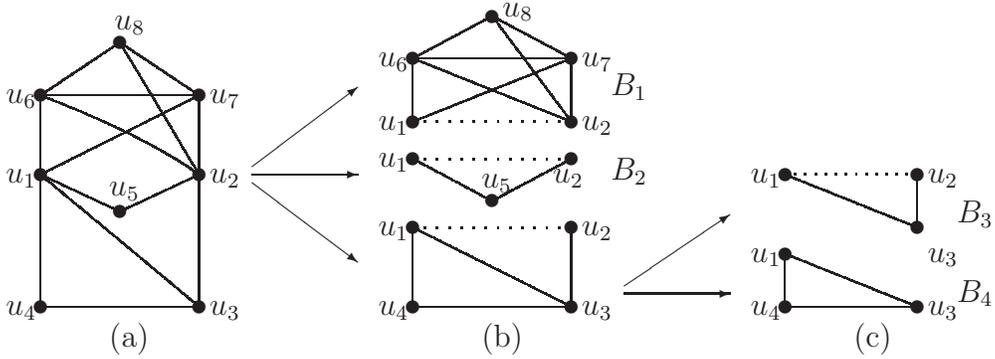

Notice that the difference between an $S$-component and its corresponding marked $S$-component is that the marked $S$-component may have more edges because of the addition of $K_S$, which are called {\em virtual edges}.

The above concepts are illustrated by Fig.\ref{fig16-3-30-1}. The graph $G$ in Fig.\ref{fig16-3-30-1}(a) has connectivity 2. Node set $S=\{u_1,u_2\}$ is a 2-separator of $G$. The marked $S$-components are depicted in Fig.\ref{fig16-3-30-1}(b). Notice that $u_1u_2$ is not an edge in $G$. While in those marked components, virtual edges (indicated by the dashed lines) are added to join $u_1$ and $u_2$. The bottom marked $S$-component in Fig. \ref{fig16-3-30-1}(b) has a 2-separator $S'=\{u_1,u_3\}$. Marked $S'$-components are depicted in Fig.\ref{fig16-3-30-1}(c). Notice that $u_1u_3$ is already an edge. So in this decomposition, no virtual edges are needed.

The reason why virtual edges are used can be seen from the following lemma. Its proof is easy. For better understanding of the decomposed structure, we include its proof here.

\begin{lemma}\label{lem15-3-30-1}
Let $G$ be a $k$-connected graph and $S$ be a $k$-separator of $G$. Then any marked $S$-component of $G$ is also $k$-connected.
\end{lemma}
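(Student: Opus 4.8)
The plan is to argue by contradiction, exploiting the single structural feature that distinguishes a marked $S$-component from an ordinary one: in $H := G[C\cup S]\cup K_S$ the separator $S$ has been turned into a clique, so no small vertex set can tear $S$ apart. First I would record the trivial size condition: since $C$ is a nonempty component of $G-S$, we have $|V(H)|=|C|+|S|\geq k+1$, which is the order requirement for $k$-connectivity. It then remains to show that $H$ has no separator of size at most $k-1$.

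So suppose, for contradiction, that $T\subseteq V(H)$ with $|T|\leq k-1$ is a separator of $H$. Because $|T|<|S|=k$, the set $S\setminus T$ is nonempty, and since $S$ induces a complete graph in $H$, all vertices of $S\setminus T$ remain mutually adjacent in $H-T$ and hence lie in a single connected component $A$ of $H-T$. As $H-T$ is disconnected, there is a second component $B$ with $B\neq A$; in particular $B$ contains no vertex of $S$, so $B\subseteq C$.

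The crux is to transfer this separation back to $G$. For any $v\in C$ the only edges of $G$ leaving $C$ go through $S$, so $N_G(v)\subseteq C\cup S=V(G[C\cup S])$. The virtual edges of $K_S$ join only vertices of $S$, so for $v\in B$ (which is disjoint from $S$) the neighborhoods in $H$ and in $G[C\cup S]$ coincide, and therefore $N_{G-T}(v)=N_{H-T}(v)\subseteq B$. Thus every vertex of $B$ keeps all of its $G-T$ neighbors inside $B$, which means $B$ is a union of connected components of $G-T$. The key step here, and the one to state carefully, is precisely this equality of neighborhoods, since it is the only place where the definition of the marked component (adding $K_S$ rather than real edges) enters.

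Finally I would close the contradiction. The set $S\setminus T$ is nonempty and disjoint from $B$, so $V(G-T)\setminus B\neq\emptyset$ while $B\neq\emptyset$; as there are no edges between $B$ and its complement in $G-T$, the graph $G-T$ is disconnected. Hence $T$ is a separator of $G$ with $|T|\leq k-1<k$, contradicting the $k$-connectivity of $G$. I expect no serious obstacle beyond bookkeeping: the whole argument rests on the clique property forcing $S\setminus T$ into one component and on the observation that $B\subseteq C$ sees identical neighborhoods in $G$ and $H$, so the assumed cut of $H$ descends to a too-small cut of $G$.
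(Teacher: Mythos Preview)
Your proof is correct and follows essentially the same route as the paper: assume a small separator $T$ of the marked $S$-component, use the clique on $S$ to force $S\setminus T$ into one component, and then observe that the other component (lying entirely in $C$) is also cut off in $G-T$, contradicting $k$-connectivity. You are simply more explicit than the paper about the order condition $|V(H)|\geq k+1$ and the neighborhood equality $N_G(v)=N_H(v)$ for $v\in B$, but the argument is the same.
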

\begin{proof}
Consider a marked $S$-component $G'$ of $G$. Suppose $S'$ is a separator of $G'$ and $|S'|<k$. Let $G_1'$ be the component of $G'-S'$ which contains $K_S\setminus S'$ (notice that $K_S\setminus S'\neq\emptyset$ since $|S|\geq k$ and $|S'|<k$, and $K_S\setminus S'$ induces a connected subgraph of $G'$), and let $G_2'$ be another component of $G'-S'$. Then we see that any path in $G$ (not only in $G'$) which connects $G_1'$ and $G_2'$ must go through $S'$, and thus $S'$ is also a separator of $G$, contradicting that $G$ is $k$-connected.
\end{proof}

Suppose $G$ is a $k$-connected graph which has a $k$-separator $S$. By Lemma \ref{lem15-3-30-1}, $G$ can be decomposed into several marked $S$-components which are also $k$-connected. If any one of these marked components, say $G'$, also has a $k$-separator $S'$, then $G'$ can be further decomposed into several marked $S'$-components. Such an operation can be recursively executed until no marked component has a $k$-separator. See Fig.\ref{fig16-3-30-1} for an illustration. The graph $G$ in Fig.\ref{fig16-3-30-1}(a) can be decomposed through the 2-separator $S=\{u_1,u_2\}$ into three marked $S$-components in Fig.\ref{fig16-3-30-1}(b). The bottom marked $S$-component is further decomposed through the 2-separator $S'=\{u_1,u_3\}$ into two marked $S'$-components in Fig.\ref{fig16-3-30-1}(c). The upper marked $S$-component in Fig.\ref{fig16-3-30-1}(b) is 3-connected, the middle marked $S$-component in Fig.\ref{fig16-3-30-1}(b) and the two marked $S'$-components in Fig.\ref{fig16-3-30-1}(c) are $K_3$'s. Since no one of them contains a 2-separator, the decomposition halts. Notice that any $k$-connected graph without $k$-separators is either a $K_{k+1}$ or a $(k+1)$-connected graph. So, in the final decomposition, there are two types of marked components, $K_{k+1}$ and $(k+1)$-connected marked component. For convenience of statement, we call these marked components {\em $k$-blocks}. For example, the graph in Fig.\ref{fig16-3-30-1} is decomposed into four $2$-blocks: $B_1,B_2,B_3,B_4$.

The original graph $G$ can be viewed as pasting these $k$-blocks through those $k$-separators used in the decomposition and ignoring those virtual edges. From such a point of view, $G$ has a {\em tree-like} structure (see Fig.\ref{fig1} for an illustration). To be more concrete, let $B_k(G)$ be a bipartite graph with bipartition $(X,Y)$, where every vertex in $X$ corresponds to a $k$-block in the final decomposition and every vertex in $Y$ corresponds to a $k$-separator used in the decomposition. Vertex $x\in X$ is adjacent with vertex $y\in Y$ in $B_k(G)$ if and only if the $k$-separator corresponding to $y$ is contained in the $k$-block corresponding to $x$. It can be seen that $B_k(G)$ is a tree (see Fig.\ref{fig1}(b)). We call $B_k(G)$ the {\em $k$-block tree of $G$}. Those $k$-blocks which correspond to leaves of $B_k(G)$ are called {\em leaf $k$-blocks}.

\begin{figure}[!htbp]
\begin{center}
\begin{picture}(50,130)
\qbezier(10,10)(-10,30)(10,50)\qbezier(10,10)(30,30)(10,50)
\qbezier(10,50)(-10,70)(10,90)\qbezier(10,50)(30,70)(10,90)
\qbezier(10,90)(-10,110)(10,130)\qbezier(10,90)(30,110)(10,130)
\qbezier(10,90)(30,70)(50,90)\qbezier(10,90)(30,110)(50,90)
\put(5,105){$B_1$}\put(25,87){$B_2$}\put(5,65){$B_3$}\put(5,25){$B_4$}
\put(-5,85){$S$}\put(-5,45){$S'$}
\put(3,-5){(a)}
\end{picture}
\hskip 1cm\begin{picture}(40,130)
\put(10,10){\circle*{5}}\put(10,40){\circle*{5}}\put(10,70){\circle*{5}}
\put(10,100){\circle*{5}}\put(40,100){\circle*{5}}\put(10,130){\circle*{5}}
\qbezier(10,10)(10,70)(10,130)\qbezier(10,100)(25,100)(40,100)
\put(-5,128){$B_1$}\put(-5,98){$S$}\put(-5,68){$B_3$}
\put(-5,38){$S'$}\put(-5,8){$B_4$}\put(35,88){$B_2$}
\put(3,-5){(b)}
\end{picture}
\end{center}
\vskip 0.3cm
\caption{$(a)$ The $2$-block structure of the graph $G$ in Fig.\ref{fig16-3-30-1}. Each ellipse represents a $2$-block. There are two $2$-separators in this graph, namely $S$ and $S'$. \  $(b)$ The $2$-block tree of $G$. There are three leaf $2$-blocks in $G$, namely $B_1,B_2$ and $B_4$, which are leaves of the $2$-block tree.}\label{fig1}
\end{figure}
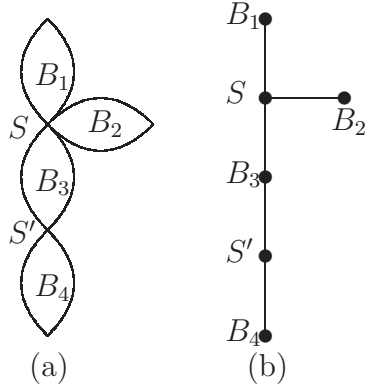

The neighbor set of a node $u$ in graph $G$ is denoted as $N_G(u)$. Its degree $d_G(u)=|N_G(u)|$. For a $k$--connected graph $G$, a minimum length $k$--connected spanning subgraph of $G$ is abbreviated as $k$-MSS. Notice that a $k$-MSS is an {\em edge induced subgraph}, not a node induced subgraph.

\begin{lemma}\label{lemma1}
Let $F$ be a $k$--MSS of a $k$--connected unit disk graph $G$. Then the maximum degree of $F$ is no more than $5k$.
\end{lemma}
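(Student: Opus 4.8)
The plan is to combine a short-edge exchange argument from plane geometry with the minimum-separator structure underlying the $k$-block decomposition above. The starting point is that a $k$-MSS is \emph{edge-critical}: since every edge length $\|e\|$ is strictly positive, deleting an edge strictly decreases $len(F)$, so no edge of $F$ can be removed without destroying $k$-connectivity. Hence for each edge $uv\in E(F)$ the graph $F-uv$ has connectivity exactly $k-1$, and there is a $(k-1)$-separator $W$ of $F-uv$ with $u,v\notin W$ and $u,v$ in different components of $(F-uv)-W$. I would pair this with the elementary geometric fact that if two neighbors $v_i,v_j$ of $u$ satisfy $\angle v_iuv_j<60^\circ$ and $\|uv_i\|\le\|uv_j\|$, then the law of cosines forces $\|v_iv_j\|<\|uv_j\|\le 1$; thus $v_iv_j\in E(G)$ and it is strictly shorter than $uv_j$. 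This short edge is the ``replacement edge'' that the exchange will use.

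I would then argue by contradiction, assuming $d_F(u)>5k$. The geometric fact says that no cone of opening $60^\circ$ can contain $k+1$ neighbors of $u$ without forcing an exchange, while a planar packing estimate limits to five the number of neighbors that are pairwise at distance greater than one inside the unit disk centered at $u$. The constant $5$ (rather than $6$) comes from threading this packing bound through the exchange; pinning down the charging that turns ``at most six $60^\circ$-sectors'' into the sharp ``five'' is a secondary geometric subtlety I would isolate as a self-contained lemma. Granting it, there is a group of at least $k+1$ neighbors in which exchanges are available; let $v^\ast$ be the one farthest from $u$ and let $v_1,\dots,v_k$ be $k$ of the others, so that each $v_iv^\ast\in E(G)$ with $\|v_iv^\ast\|<\|uv^\ast\|$, while $uv_i,uv^\ast\in E(F)$.

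The heart of the proof, and the step I expect to be the main obstacle, is to show that for at least one index $i$ the exchanged graph $F'=F-uv^\ast+v_iv^\ast$ is still $k$-connected; since $len(F')<len(F)$, this contradicts minimality. Write $H=F-uv^\ast$. Because $F=H+uv^\ast$ is $k$-connected, every $(k-1)$-separator of $H$ separates $u$ from $v^\ast$, and $F'$ is $k$-connected precisely when every such separator also separates $v_i$ from $v^\ast$. To locate a good $v_i$ I would use the lattice structure of minimum $u$--$v^\ast$ separators: among all $(k-1)$-separators of $H$ choose one, $W$, whose $u$-side $D$ is inclusion-minimal, so that $D$ is contained in the $u$-side of every $(k-1)$-separator of $H$. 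Each $v_i$ is adjacent to $u$ in $H$, hence lies in $D\cup W$; since $|W|=k-1$ but there are $k$ vertices $v_1,\dots,v_k$, pigeonhole forces some $v_i\in D$. For that $v_i$, minimality of $D$ places $v_i$ on the $u$-side of \emph{every} $(k-1)$-separator of $H$, so $v_iv^\ast$ repairs all of them and $F'$ is $k$-connected. Establishing the containment property of $D$ (equivalently the closure of minimum separators under the usual crossing/uncrossing) is the delicate point; everything around it is bookkeeping.

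Finally, the contradiction shows that each such group contains at most $k$ neighbors of $u$, and with at most five groups this yields $d_F(u)\le 5k$. I would close by re-examining the two places where the estimate is tight in spirit, namely the $60^\circ$ threshold in the geometric lemma and the pigeonhole count $k$ versus $|W|=k-1$, since these are exactly what force ``$k$ per group'' and hence the stated bound.
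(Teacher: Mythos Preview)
Your overall strategy---edge-criticality of a $k$-MSS, the $60^\circ$ shortcut, and a pigeonhole against a $(k-1)$-separator---is the paper's, and your min-cut lattice argument (choosing $W$ with inclusion-minimal $u$-side $D$ and finding some $v_i\in D$) is a clean repackaging of what the paper does with the $(k-1)$-block tree of $F-uv^\ast$: the paper's leaf block $B_u$ and separator $S_u$ play exactly the roles of your $D\cup W$ and $W$. That part is fine.

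The genuine gap is the step you label a ``secondary geometric subtlety'': it is the whole difficulty, and your route does not close it. Six $60^\circ$ sectors are needed to cover the circle, so from $d_F(u)>5k$ pigeonhole does \emph{not} produce a $60^\circ$ sector with $k+1$ neighbours; five $72^\circ$ sectors lose the inequality $\|v_iv^\ast\|<\|uv^\ast\|$ that drives the exchange. The packing fact ``at most five pairwise-nonadjacent neighbours in a unit disk'' does not help either, since being within distance $1$ of a fixed neighbour is not an angular condition at $u$ and does not feed the law-of-cosines step. As written, your argument proves only $d_F(u)\le 6k$.

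The paper avoids any partition. For a fixed edge $uv$ it shows that \emph{every} neighbour $u'$ with $\angle u'uv<\pi/3$ must lie in the $(k-1)$-separator $S_u$ of $F-uv$ closest to $u$; hence the full two-sided cone $A_1=\{u'\in N_F(u)\setminus\{v_1\}:\angle u'uv_1<\pi/3\}$ has size at most $k-1$. It then takes $v_2$ to be the first neighbour clockwise with $\angle v_1uv_2\ge\pi/3$ and reapplies the claim to the edge $uv_2$ with a \emph{different} separator, and so on. Because each $A_i$ spans $2\pi/3$, after four clockwise steps of at least $\pi/3$ the cones $A_1,\dots,A_5$ together with $\{v_1,\dots,v_5\}$ cover all of $N_F(u)$, giving $d_F(u)\le 5+5(k-1)=5k$. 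The key difference from your plan is that the separator bound is invoked five times, once per edge $uv_i$ and each time against its own $(k-1)$-separator, rather than once on a single pre-formed group of $k+1$ neighbours.
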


\begin{proof}
First, we have the following properties.

$(a)$ For any edge $uv$, $F-uv$ has connectivity $k-1$ because $F$ is minimal with respect to $k$-connectivity.

The $(k-1)$-block tree of $F-uv$ is a path. To see this, denote by $B_u$ and $B_v$ the two $(k-1)$-blocks of $F-uv$ containing $u$ and $v$, respectively. Let $P_{uv}$ be the unique path on the $(k-1)$-block tree $B_{k-1}(G)$ connecting $B_u$ and $B_v$. Adding edge $uv$ back to $F-uv$ will merge those $(k-1)$-blocks of $F-uv$ on $P_{uv}$ into a larger $(k-1)$-block (see Fig.\ref{fig15-7-23-1}(a)). So, if $B_{k-1}(G)$ is not a path, then $F-uv$ has a leaf block outside of $P_{uv}$, which cannot be merged (see Fig.\ref{fig15-7-23-1}(b)), contradicting that $F=(F-uv)+uv$ is $k$-connected.

For the same reason, nodes $u,v$ must belong to the two leaf $(k-1)$-blocks of $F- uv$, respectively. Furthermore, let $S_u$ be the unique $(k-1)$-separator contained in $B_u$ and let $S_v$ be the unique $(k-1)$-separator contained in $B_v$, respectively. We must have $u\not\in S_u$ and $v\not\in S_v$ (Fig.\ref{fig15-7-23-1}(c))

$(b)$ $N_F(u)\setminus\{v\}\subseteq B_u$ and $N_F(v)\setminus \{u\}\subseteq B_v$.

$(c)$ As a consequence of $(b)$, if $d_F(u)\geq k+1$, then $|B_u|\geq k+1$, and thus $B_u$ is $k$-connected.

\vskip 0.1cm
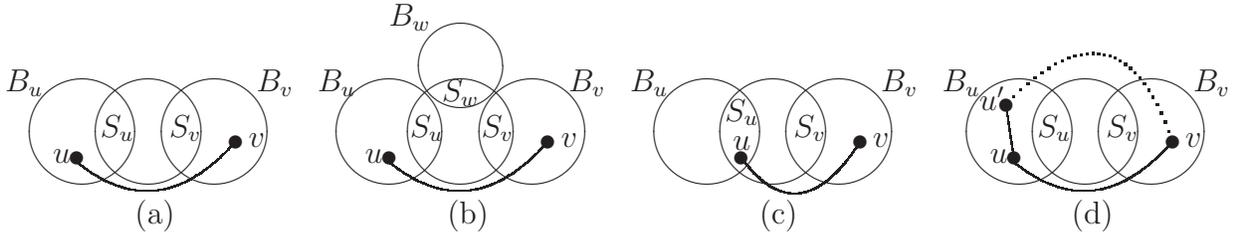
\begin{figure}[!hbtp]
\begin{center}
\hskip -0.cm \begin{picture}(100,60)
\put(20,30){\circle{40}}\put(45,30){\circle{40}}\put(70,30){\circle{40}}
\put(18,20){\circle*{5}}\put(78,26){\circle*{5}}
\qbezier(16,20)(48,-8)(78,26)
\put(9,18){$u$}\put(83,24){$v$}\put(-9,45){$B_u$}\put(86,45){$B_v$}\put(27,28){$S_u$}\put(53,28){$S_v$}
\put(40,-5){(a)}
\end{picture}
\hskip 0.5cm
\begin{picture}(100,80)
\put(18,30){\circle{40}}\put(45,30){\circle{40}}\put(72,30){\circle{40}}\put(45,55){\circle{30}}
\put(18,20){\circle*{5}}\put(78,26){\circle*{5}}
\qbezier(16,20)(48,-8)(78,26)
\put(9,18){$u$}\put(83,24){$v$}\put(-9,45){$B_u$}\put(86,45){$B_v$}\put(18,70){$B_w$}\put(26,28){$S_u$}\put(53,28){$S_v$}\put(38,41.5){$S_w$}
\put(40,-5){(b)}
\end{picture}
\hskip 0.5cm \begin{picture}(100,50)
\put(20,30){\circle{40}}\put(45,30){\circle{40}}\put(70,30){\circle{40}}
\put(33,20){\circle*{5}}\put(78,26){\circle*{5}}
\qbezier(33,20)(55,-10)(78,26)
\put(30,23){$u$}\put(83,24){$v$}\put(-9,45){$B_u$}\put(86,45){$B_v$}\put(27,34){$S_u$}\put(53,28){$S_v$}
\put(40,-5){(c)}
\end{picture}
\hskip 0.5cm
\begin{picture}(100,60)
\put(20,30){\circle{40}}\put(45,30){\circle{40}}\put(70,30){\circle{40}}
\put(18,20){\circle*{5}}\put(15,40){\circle*{5}}\put(78,26){\circle*{5}}
\qbezier(16,20)(48,-8)(78,26)\qbezier(18,20)(16.5,30)(15,40)
{\linethickness{0.26mm}\qbezier[30](15,40)(58,85)(78,26)}
\put(9,18){$u$}\put(5,38){$u'$}\put(83,24){$v$}\put(-9,45){$B_u$}\put(86,45){$B_v$}\put(27,28){$S_u$}\put(53,28){$S_v$}
\put(40,-5){(d)}
\end{picture}
\end{center}
\vskip 0.3cm
\caption{$(a)$ An illustration for the structure of $F-uv$. $(b)$ An illustration of why the $(k-1)$-block structure of $F-uv$ cannot have more than two leaf $(k-1)$-blocks: adding $uv$ to $F-uv$ cannot eliminate the $(k-1)$-separator $S_w$. $(c)$ An illustration of why $u,v$ must belong to leaf $(k-1)$-blocks and why they cannot belong to corresponding $(k-1)$-separators: adding $uv$ to $F-uv$ cannot eliminate the $(k-1)$-separator $S_u$. $(d)$ An illustration for the proof of the claim in Lemma \ref{lemma1}.} \label{fig15-7-23-1}
\end{figure}

{\bf Claim.} If $d_F(u)\geq k+1$, then any node $u'\in N_F(u)\setminus (S_u\cup\{v\})$ has $\angle u'uv\geq \pi/3$.

Since $d_F(u)\geq k+1$, we have $N_F(u)\setminus (S_u\cup\{v\})\neq\emptyset$. Suppose the claim is not true. Let $u'$ be a node in $N_F(u)\setminus (S_u\cup\{v\})$ with $\angle u'uv<\pi/3$. Notice that $u'v\not\in E(F)$ by the $(k-1)$-block structure of $F-uv$ (see Fig.\ref{fig15-7-23-1}$(d)$). First consider the case that $\|uu'\|\leq\|uv\|$. In this case $\|u'v\|<\|uv\|$ and thus $u'v\in E(G)\setminus E(F)$. Since $u'\not\in S_u$ and $v\not\in S_v$, subgraph $F'=F-uv+u'v$ is also a $k$-connected spanning subgraph of $G$ (see Fig.\ref{fig15-7-23-1}$(d)$). However, $len(F')=len(F)-\|uv\|+\|u'v\|<len(F)$, contradicting the minimality of $F$. Next, we consider the case that $\|uv\|<\|uu'\|$. In this case, $\|u'v\|<\|uu'\|$, and a contradiction will follow as long as we can prove that $F'=F-uu'+u'v$ is $k$-connected. For this purpose, notice that $B_u$ is $k$-connected by property $(c)$. So, $B_u-uu'$ is $(k-1)$-connected. Furthermore, if $B_u-uu'$ is not $k$-connected, the $(k-1)$-block tree of $B_u-uu'$ must be a path, and $u,u'$ are distributed in the two leaf $(k-1)$-blocks of $B_u-uu'$, avoiding the corresponding $(k-1)$-separators (this is similar to the proof of property $(a)$, since otherwise adding $uu'$ back will not result in a $k$-connected marked component). As a consequence, if $B_u-uu'$ is $k$-connected, then the $(k-1)$-block tree of $F-uv-uu'$ is still a path. If $B_u-uu'$ is not $k$-connected, then the $(k-1)$-block tree of $F-uv-uu'$ has the shape in Fig.\ref{fig16-3-31-1}. In any case, $F'=(F-uv-uu')+uv+u'v$ is a $k$-connected spanning subgraph of $G$ with shorter length than $F$, a contradiction. The claim is proved.

\begin{figure}[!hbtp]\setlength{\unitlength}{0.3mm}
\begin{center}
\begin{picture}(110,100)
\put(67,43){\circle{40}}\put(90,43){\circle{40}}
\put(28,70){\circle*{5}}\put(10,24){\circle*{5}}\put(90,33){\circle*{5}}
{\linethickness{0.25mm}\qbezier[21](28,70)(19,47)(10,24)}
\put(32,68){$u'$}\put(13,17){$u$}\put(93,30){$v$}\put(1,6){$B'_u$}\put(-5,73){$B'_{u'}$}\put(35,35){$B'_x$}\put(90,65){$B_v$}
\put(48,40){$S_u$}\put(72,40){$S_v$}\put(30,50){$S$}\put(20,27){$S'$}

\qbezier(60.0,45.0)(59.7,50.0)(58.8,54.9)
\qbezier(58.8,54.9)(57.3,59.5)(55.2,63.9)
\qbezier(55.2,63.9)(52.7,67.9)(49.6,71.5)
\qbezier(49.6,71.5)(46.2,74.4)(42.5,76.8)
\qbezier(42.5,76.8)(38.5,78.6)(34.3,79.6)
\qbezier(34.3,79.6)(30.0,80.0)(25.7,79.6)
\qbezier(25.7,79.6)(21.5,78.6)(17.5,76.8)
\qbezier(17.5,76.8)(13.8,74.4)(10.4,71.5)
\qbezier(10.4,71.5)(7.3,67.9)(4.8,63.9)
\qbezier(4.8,63.9)(2.7,59.5)(1.2,54.9)
\qbezier(1.2,54.9)(0.3,50.0)(0.0,45.0)
\qbezier(0.0,45.0)(0.3,40.0)(1.2,35.1)
\qbezier(1.2,35.1)(2.7,30.5)(4.8,26.1)
\qbezier(4.8,26.1)(7.3,22.1)(10.4,18.5)
\qbezier(10.4,18.5)(13.8,15.6)(17.5,13.2)
\qbezier(17.5,13.2)(21.5,11.4)(25.7,10.4)
\qbezier(25.7,10.4)(30.0,10.0)(34.3,10.4)
\qbezier(34.3,10.4)(38.5,11.4)(42.5,13.2)
\qbezier(42.5,13.2)(46.2,15.6)(49.6,18.5)
\qbezier(49.6,18.5)(52.7,22.1)(55.2,26.1)
\qbezier(55.2,26.1)(57.3,30.5)(58.8,35.1)
\qbezier(58.8,35.1)(59.7,40.0)(60.0,45.0)

\qbezier(57.5,58.5)(56.5,60.3)(54.6,61.9)
\qbezier(54.6,61.9)(51.6,63.1)(47.8,63.9)
\qbezier(47.8,63.9)(43.2,64.2)(38.1,64.2)
\qbezier(38.1,64.2)(32.7,63.7)(27.1,62.8)
\qbezier(27.1,62.8)(21.6,61.5)(16.3,60.0)
\qbezier(16.3,60.0)(11.6,58.1)(7.5,56.0)
\qbezier(7.5,56.0)(4.3,53.9)(2.0,51.6)
\qbezier(2.0,51.6)(0.7,49.5)(0.5,47.5)
\qbezier(0.5,47.5)(1.5,45.7)(3.4,44.1)
\qbezier(3.4,44.1)(6.4,42.9)(10.2,42.1)
\qbezier(10.2,42.1)(14.8,41.8)(19.9,41.8)
\qbezier(19.9,41.8)(25.3,42.3)(30.9,43.2)
\qbezier(30.9,43.2)(36.4,44.5)(41.7,46.0)
\qbezier(41.7,46.0)(46.4,47.9)(50.5,50.0)
\qbezier(50.5,50.0)(53.7,52.1)(56.0,54.4)
\qbezier(56.0,54.4)(57.3,56.5)(57.5,58.5)

\qbezier(0.9,50.2)(0.2,48.4)(0.3,45.9)
\qbezier(0.3,45.9)(1.2,43.0)(3.0,39.6)
\qbezier(3.0,39.6)(5.6,36.0)(8.9,32.3)
\qbezier(8.9,32.3)(12.7,28.6)(16.8,25.1)
\qbezier(16.8,25.1)(21.3,21.9)(25.8,19.1)
\qbezier(25.8,19.1)(30.1,16.9)(34.2,15.2)
\qbezier(34.2,15.2)(37.9,14.3)(41.0,14.1)
\qbezier(41.0,14.1)(43.4,14.6)(45.1,15.8)
\qbezier(45.1,15.8)(45.8,17.6)(45.7,20.1)
\qbezier(45.7,20.1)(44.8,23.0)(43.0,26.4)
\qbezier(43.0,26.4)(40.4,30.0)(37.1,33.7)
\qbezier(37.1,33.7)(33.3,37.4)(29.2,40.9)
\qbezier(29.2,40.9)(24.7,44.1)(20.2,46.9)
\qbezier(20.2,46.9)(15.9,49.1)(11.8,50.8)
\qbezier(11.8,50.8)(8.1,51.7)(5.0,51.9)
\qbezier(5.0,51.9)(2.6,51.4)(0.9,50.2)
\put(50,-5){$(a)$}
\end{picture}
\hskip 1cm\begin{picture}(120,100)
\put(10,40){\circle*{5}}\put(30,40){\circle*{5}}\put(50,40){\circle*{5}}\put(70,40){\circle*{5}}\put(90,40){\circle*{5}}
\put(10,0){\circle*{5}}\put(10,20){\circle*{5}}\put(10,60){\circle*{5}}\put(10,80){\circle*{5}}
\qbezier(10,40)(50,40)(90,40)\qbezier(10,0)(10,40)(10,80)
{\linethickness{0.25mm}\qbezier[33](10,80)(90,70)(90,40)\qbezier[33](10,0)(90,10)(90,40)}
\put(-5,-4){$B'_u$}\put(-5,16){$S'$}\put(-5,36){$B'_x$}\put(-5,56){$S$}\put(-5,75){$B'_{u'}$}
\put(26,28){$S_u$}\put(66,28){$S_v$}\put(94,36){$B_v$}
\put(50,-5){$(a')$}
\end{picture}
\begin{picture}(110,110)
\put(65,43){\circle{40}}\put(90,43){\circle{40}}
\put(20,68){\circle*{5}}\put(20,20){\circle*{5}}\put(90,33){\circle*{5}}
\qbezier(45,43)(15,43)(10,65)\qbezier(45,43)(15,43)(10,22)
\qbezier(10,65)(10,100)(52,59)\qbezier(10,22)(10,-14)(52,27)
\qbezier(52,59)(70,43)(52,27)

\qbezier(43.6,42.8)(42.4,43.0)(40.9,42.5)
\qbezier(40.9,42.5)(39.3,41.5)(37.5,39.8)
\qbezier(37.5,39.8)(35.7,37.7)(33.9,35.1)
\qbezier(33.9,35.1)(32.2,32.2)(30.6,29.0)
\qbezier(30.6,29.0)(29.2,25.8)(28.1,22.6)
\qbezier(28.1,22.6)(27.3,19.5)(26.8,16.8)
\qbezier(26.8,16.8)(26.7,14.3)(26.9,12.4)
\qbezier(26.9,12.4)(27.5,11.0)(28.4,10.2)
\qbezier(28.4,10.2)(29.6,10.0)(31.1,10.5)
\qbezier(31.1,10.5)(32.7,11.5)(34.5,13.2)
\qbezier(34.5,13.2)(36.3,15.3)(38.1,17.9)
\qbezier(38.1,17.9)(39.8,20.8)(41.4,24.0)
\qbezier(41.4,24.0)(42.8,27.2)(43.9,30.4)
\qbezier(43.9,30.4)(44.7,33.5)(45.2,36.2)
\qbezier(45.2,36.2)(45.3,38.7)(45.1,40.6)
\qbezier(45.1,40.6)(44.5,42.0)(43.6,42.8)

{\linethickness{0.25mm}\qbezier[21](20,68)(20,44)(20,20)}
\put(16,71){$u'$}\put(18,12){$u$}\put(93,30){$v$}
\put(-10,15){$B'_u$}\put(-10,65){$B'_{u'}$}\put(50,13){$B'_x$}\put(95,65){$B_v$}
\put(47,40){$S_u$}\put(72,40){$S_v$}\put(34,24){$S$}
\put(50,-5){$(b)$}
\end{picture}
\hskip 1cm\begin{picture}(110,110)
\put(10,0){\circle*{5}}\put(10,20){\circle*{5}}\put(10,40){\circle*{5}}\put(10,60){\circle*{5}}\put(10,80){\circle*{5}}
\put(30,60){\circle*{5}}\put(50,60){\circle*{5}}\put(70,60){\circle*{5}}
\qbezier(10,0)(10,40)(10,80)\qbezier(10,60)(40,60)(70,60)
{\linethickness{0.25mm}\qbezier[19](10,80)(50,90)(70,60)\qbezier[31](10,0)(70,0)(70,60) }
\put(-5,-4){$B'_u$}\put(-5,16){$S'$}\put(-5,36){$B'_x$}\put(-5,56){$S_u$}\put(-5,75){$B'_{u'}$}
\put(46,48){$S_v$}\put(74,56){$B_v$}
\put(50,-5){$(b')$}
\end{picture}
\end{center}
\caption{An illustration of the $(k-1)$-block structure of $F-uv-uu'$. $(a)$ illustrates the case when $S_u$ is not a $(k-1)$-separator of $B_u-uu'$ and $(b)$ illustrates the case when $S_u$ is a $(k-1)$-separator of $B_u-uu'$. $(a')$ and $(b')$ are corresponding $(k-1)$-block trees. Adding edges $uv$ and $u'v$ (indicated by dashed lines) results in a $k$-connected graph.}\label{fig16-3-31-1}
\end{figure}
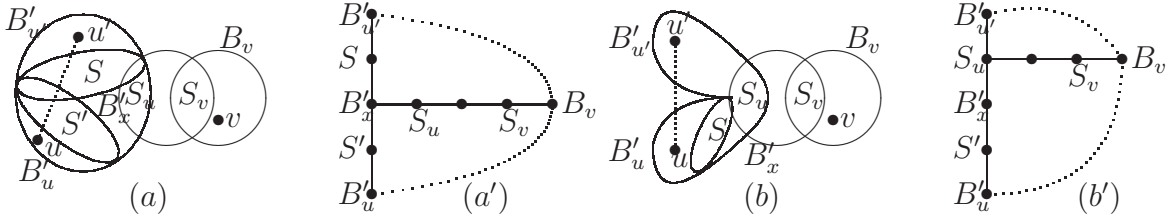

Now, consider a node $u$ with $d_F(u)\geq k+1$ (see Fig.\ref{fig15-8-5-1}). Let $v_1$ be a node in $N_F(u)$, and let $A_1$ be the set of nodes $u_1\in N_F(u)\setminus\{v_1\}$ with $\angle u_1uv_1<\pi/3$. By the above claim, $A_1\subseteq S_u$ and thus $|A_1|\leq k-1$. Let $v_2$ be the first node in $N_F(u)$ which has $\angle v_1uv_2\geq \pi/3$ (where `first' is counted clockwise), and let $A_2$ be the set of nodes $u_2\in N_F(u)\setminus\{v_2\}$ with $\angle u_2uv_2<\pi/3$. Similar to the above, $|A_2|\leq k-1$. Continuing this procedure, we obtain a sequence of nodes $v_1,\ldots,v_5$ and a sequence of sets $A_1,\ldots,A_5$ such that for each $i=1,\ldots,4$, $v_{i+1}$ is the first node in $N_F(u)$ with $\angle v_iuv_{i+1}\geq \pi/3$ and $|A_i|\leq k-1$. Clearly, $N_G(u)\subseteq\big(\bigcup_{i=1}^tA_i\big)\cup\{v_i\}_{i=1}^5$. Hence $|N_F(u)|\leq 5k$. The lemma is proved.
\end{proof}

\vskip 0.1cm
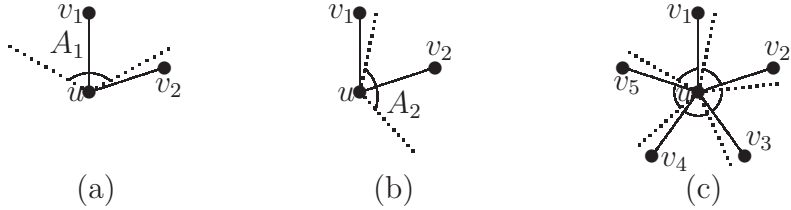
\begin{figure}[!hbtp]
\begin{center}
\begin{picture}(70,60)
\put(30,60){\circle*{5}}
\put(58.5,39.3){\circle*{5}}
\put(30,30){\circle*{5}}
\qbezier(30,60)(30,45)(30,30)
\qbezier(58.5,39.3)(44.3,34.7)(30,30)
{\linethickness{0.3mm}\qbezier[11](0,47.3)(15.0,38.7)(30,30)\qbezier[11](60,46.3)(45.0,38.2)(30,30)}
\qbezier(22.2,34.5)(30,40)(37.8,34.5)
\put(21.5,26.5){$u$}\put(18,58){$v_1$}\put(55,30){$v_2$}\put(15,45){$A_1$}
\put(25,-10){(a)}
\end{picture}
\hskip 1cm \begin{picture}(70,70)
\put(30,60){\circle*{5}}
\put(58.5,39.3){\circle*{5}}
\put(30,30){\circle*{5}}
\qbezier(30,60)(30,45)(30,30)
\qbezier(58.5,39.3)(44.3,34.7)(30,30)
{\linethickness{0.3mm}\qbezier[11](36.2,59.3)(33.1,44.7)(30,30)\qbezier[11](50.1,7.7)(40,18.8)(30,30) }
\qbezier(31.9,38.8)(38.6,32.8)(36.0,23.3)
\put(21.5,26.5){$u$}\put(18,58){$v_1$}\put(55,44){$v_2$}\put(40,22){$A_2$}
\put(35,-10){(b)}
\end{picture}
\hskip 1.9cm \begin{picture}(70,70)
\put(30,60){\circle*{5}}
\put(1.5,39.3){\circle*{5}}
\put(12.4,5.7){\circle*{5}}
\put(47.6,5.7){\circle*{5}}
\put(58.5,39.3){\circle*{5}}
\put(30,30){\circle*{5}}
\qbezier(30,60)(30,45)(30,30)
\qbezier(1.5,39.3)(15.8,34.7)(30,30)
\qbezier(12.4,5.7)(21.2,17.9)(30,30)
\qbezier(47.6,5.7)(38.8,17.9)(30,30)
\qbezier(58.5,39.3)(44.3,34.7)(30,30)
{\linethickness{0.3mm}
\qbezier[11](36.2,59.3)(33.1,44.7)(30,30)
\qbezier[11](4.0,45)(17.0,37.5)(30,30)
\qbezier[11](7.7,9.9)(18.9,20)(30,30)
\qbezier[11](42.2,2.6)(36.1,16.3)(30,30)
\qbezier[11](59.8,33.1)(46,31.6)(30,30)}
\qbezier(31.9,38.8)(37,37)(38.6,32.8)
\qbezier(22.2,34.5)(26,38)(30,39)
\qbezier(23.3,24)(20,28)(21.4,32.8)
\qbezier(33.7,21.8)(29,20)(24.7,22.7)
\qbezier(39.0,30.9)(39,26)(35.3,22.7)
\put(21.5,26.5){$u$}\put(18,58){$v_1$}\put(55,44){$v_2$}\put(48,10){$v_3$}\put(16,2){$v_4$}\put(-2,31){$v_5$}
\put(25,-10){(c)}
\end{picture}
\end{center}
\caption{An illustration of counting $|N_F(u)|$. In $(a)$, the set of nodes falling into the area between the two dashed lines (except node $v_1$) is $A_1$ and $|A_1|\leq k-1$. $v_2$ is the first node with $\angle v_1uv_2\geq \pi/3$. In $(b)$, the set of nodes falling into the area between the two dashed lines, except node $v_2$, is $A_2$ and $|A_2|\leq k-1$. In $(c)$, each angle symbol between a dashed line and a solid line indicates an angle of $\pi/3$. The remaining area contains five narrow angles, each of which is $\pi/15$. All nodes of $N_F(u)$ lie in the narrow angles. This figure shows that upper bound $5k$ can be reached.}\label{fig15-8-5-1}
\end{figure}

It should be noticed that $A_i$ might have an overlap with $A_{i+1}$. One question is whether we can make use of such an overlap to decrease the upper bound. In Fig.\ref{fig15-8-5-1}$(c)$, suppose $N_F(u)$ lie in those narrow angles each of which is bounded by a solid line and a dashed line, and each narrow angle contains exactly $k$ neighbors of $u$ in $F$. Such a configuration does not violate the claim in the proof of Lemma \ref{lemma1}. So, $5k$ cannot be improved based on the claim. Whether there exists some other method to improve the upper bound of $|N_F(u)|$ remains to be further explored.

For $k=2$, the above upper bound $5k$ can be improved to 5. For this purpose, we first prove the following lemma.

\begin{lemma}\label{lemma2}
Let $G$ be a $2$--connected unit disk graph and $F$ be a 2-MSS of $G$.

$(\romannumeral1)$ For any node $u\in F$ with $d_F(u)\geq 3$, no two of its neighbors are adjacent in $F$.

$(\romannumeral2)$  The angle between any two adjacent edges, that meet at a node with degree at least 3 in $F$, is at least $\pi/3$.

$(\romannumeral3)$  If a node $u\in F$ with $d_F(u)\geq 3$ has two neighbors $v$ and $u'$ with $\angle vuu'=\pi/3$, then $\|uv\| = \|uu'\|$. In this case, we can delete either $uv$ or $uu'$ and add edge $vu'$, obtaining another 2-MSS of $G$.
\end{lemma}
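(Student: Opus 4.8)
The plan is to prove all three parts by \emph{exchange arguments} against the length-minimality of $F$, specializing the $(k-1)$-block decomposition of Lemma~\ref{lemma1} to $k=2$, so that every separator is a single cut vertex. I work at a node $u$ with $d_F(u)\geq 3$; then Lemma~\ref{lemma1}(c) applies, $u$ always has a ``third'' neighbor besides any two under consideration, and the claim inside the proof of Lemma~\ref{lemma1} tells us that every neighbor of $u$ \emph{other than the separator} makes an angle at least $\pi/3$ with any fixed incident edge.

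For $(\romannumeral1)$ I assume two neighbors $v,w$ of $u$ have $vw\in E(F)$ and derive a contradiction. Applying Lemma~\ref{lemma1} to the edge $uv$, the containments $N_F(u)\setminus\{v\}\subseteq B_u$ and $N_F(v)\setminus\{u\}\subseteq B_v$ force $w\in B_u\cap B_v$; as $B_u,B_v$ are the two leaf blocks of a path-like block tree, this is possible only if $w$ is the unique cut vertex of $F-uv$. The same reasoning on $uw$ makes $v$ the unique cut vertex of $F-uw$. Now fix a third neighbor $x\in N_F(u)\setminus\{v,w\}$, which the containments place on the $u$-side of both cut vertices. Since $F$ is $2$-connected, $F-u$ is connected, so there is a simple $x$--$v$ path $P$ in $F-u$; it must cross the cut vertex $w$, but the prefix of $P$ from $x$ to $w$ would then have to cross $v$, while $v$ is the far endpoint of $P$ --- impossible for a simple path. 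This contradiction proves $(\romannumeral1)$.

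For $(\romannumeral2)$, suppose two neighbors $v,w$ of $u$ satisfy $\angle vuw<\pi/3$, with $\|uv\|\geq\|uw\|$ w.l.o.g. The law of cosines gives $\|vw\|<\|uv\|\leq 1$, hence $vw\in E(G)$, and $vw\notin E(F)$ by $(\romannumeral1)$. If $w$ is not the cut vertex of $F-uv$, the claim inside Lemma~\ref{lemma1} directly yields $\angle vuw\geq\pi/3$ (equivalently, the exchange $F-uv+vw$ is a shorter $2$-connected spanning subgraph), a contradiction. The only remaining, and main, obstacle is the \emph{separator case}: $w$ is the cut vertex of $F-uv$, in which case the added edge $vw$ meets the cut vertex and fails to bridge the two leaf blocks, so the naive exchange does not preserve $2$-connectivity. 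Applying the claim symmetrically to $uw$ shows this case forces $v$ to be the cut vertex of $F-uw$ as well --- precisely the configuration excluded in the proof of $(\romannumeral1)$ --- so the same simple-path argument rules it out. This is exactly where $k=2$ is special: Lemma~\ref{lemma1} controls angles only for non-separator neighbors, and part $(\romannumeral1)$ is what disposes of the single separator neighbor.

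Finally, $(\romannumeral3)$ is the boundary case $\angle vuu'=\pi/3$, where $\|vu'\|^2=\|uv\|^2+\|uu'\|^2-\|uv\|\,\|uu'\|$. If $\|uv\|\neq\|uu'\|$ this is strictly below $\max(\|uv\|,\|uu'\|)^2\leq 1$, giving a strictly shorter chord $vu'\in E(G)\setminus E(F)$, and the same dichotomy as in $(\romannumeral2)$ (valid exchange, or the impossible configuration of $(\romannumeral1)$) produces a contradiction; hence $\|uv\|=\|uu'\|$. With equality the triangle $uvu'$ is equilateral, so $\|vu'\|=\|uv\|=\|uu'\|\leq 1$ and $vu'\in E(G)\setminus E(F)$; deleting either $uv$ or $uu'$ and inserting $vu'$ leaves the total length unchanged, and since the configuration of $(\romannumeral1)$ cannot occur, at least one of these exchanges preserves $2$-connectivity, yielding another $2$-MSS. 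Throughout, the hypothesis $d_F(u)\geq 3$ is essential: it supplies the third neighbor $x$ and Lemma~\ref{lemma1}(c), and ensures an edge removal at $u$ does not strand $u$ --- without it the angle bound can genuinely fail, as at a degree-$2$ corner of a thin quadrilateral.
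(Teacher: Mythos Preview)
Your argument for $(\romannumeral1)$ is correct and takes a different route from the paper. The paper first proves a standalone \emph{Claim}: for any $u$ with $d_F(u)\geq 3$ and any incident edge $uv$, one has $N_F(u)\cap S_u=\emptyset$. Its proof is purely combinatorial: if some neighbor $u'$ of $u$ were the cut vertex $S_u$ of $F-uv$, then (using that $B_u$ is $2$-connected by property~(c)) one finds a $(u,u')$-path inside $B_u-uu'$, and a second $(u,u')$-path through $v$ and the remaining blocks; these are internally disjoint in $F-uu'$, contradicting property~(a). From this Claim, $(\romannumeral1)$ is immediate. Your third-neighbor path argument is a valid alternative, but note that it only excludes the \emph{joint} configuration ``$w=S_u$ in $F-uv$ \emph{and} $v=S_u'$ in $F-uw$'', which is strictly weaker than the paper's Claim.

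That weakness is exactly where your proofs of $(\romannumeral2)$ and $(\romannumeral3)$ break. In the separator case of $(\romannumeral2)$ you write ``applying the claim symmetrically to $uw$ shows \ldots\ $v$ is the cut vertex of $F-uw$''. But the Claim inside Lemma~\ref{lemma1} is not symmetric in $v$ and $u'$: the separator $S_u$ is defined with respect to the \emph{deleted} edge, and the ``without loss of generality $\|uv\|\geq\|uu'\|$'' in its proof is not a symmetry reduction---it is what guarantees that the exchange $F-uv+u'v$ is strictly shorter. Applied to the edge $uw$ with $u'=v$, the needed inequality becomes $\|uw\|\geq\|uv\|$, the opposite of your standing assumption; the alternative exchange $F-uw+vw$ does preserve $2$-connectivity when $v\notin S_u'$, but need not decrease length. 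So you have not established that $v$ must be the cut vertex of $F-uw$, and the reduction to the configuration ruled out in $(\romannumeral1)$ does not go through. The same gap recurs in the first half of $(\romannumeral3)$, and in its second half you only conclude that \emph{at least one} of the two exchanges yields a $2$-MSS, whereas the statement (and its use in Lemma~\ref{lemma3}) asserts that \emph{both} do. The paper's Claim resolves all of this at a stroke: once $N_F(u)\cap S_u=\emptyset$ is known, the separator case is vacuous and every exchange under consideration is automatically valid.
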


\begin{proof}
We use notations and terminologies in the proof of Lemma \ref{lemma1}.

\vskip 0.2cm {\bf Claim.} For any node $u$ with $d_F(u)\geq 3$, $N_F(u)\cap S_u=\emptyset$.

Suppose the claim is not true. Let $u'$ be a node in $N_F(u)\cap S_u$. Since $B_{u}$ is $2$--connected (by property $(c)$ of Lemma \ref{lemma1}), there is an $(u,u')$-path $P$ in $B_{u}-uu'$. Since $B_u$ is a leaf block, there is an $(u',v)$-path $Q$ in $F-(B_{u}\setminus S_u)$. Because $|S_u|=k-1=1$, so $u'$ is the only node in $S_u$, and thus path $Q$ and path $P$ are internally disjoint. Concatenating $Q$ with edge $uv$, we have a $(u,u')$-path in $F-uu'$ which is internally disjoint with path $P$ in $F$. Hence $F-uu'$ has two internally-disjoint $(u,u')$-paths. On the other hand, applying property $(a)$ to edge $uu'$, we see that there is only one internally disjoint path between $u$ and $u'$ in $F-uu'$, a contradiction.

Now, we prove the three properties of this lemma.

$(\romannumeral1)$  Suppose $u\in F$ has $d_F(u)\geq 3$, and $v,u'$ are two neighbors of $u$ which are adjacent in $F$. By Property $(b)$ of Lemma \ref{lemma1}, node $u'$ is in $B_{u}\cap B_{v}$. But then $u'$ is in a 1-separator of $F-uv$, contradicting the above claim.

$(\romannumeral2)$ Suppose two edges $uv$, $uu'$ meet at node $u$ with an angle of less than $\pi/3$. Assume, without loss of generality, that $\|uv\|\geq \|uu'\|$. In this case, $\|uv\| > \|vu'\|$. Since $u'\in B_u\setminus S_u$ and $v\in B_v\setminus S_v$, adding edge $vu'$ merges all blocks of $F-uv$ into one $2$--connected graph. So $F^{'}=F-uv+vu'$ is a $2$--connected spanning subgraph of $G$ with shorter total edge length, contradicting the minimality of $F$.

$(\romannumeral3)$  Suppose edges $uv,uu'$ meet at node $u$ with an angle of $\pi/3$. If $\|uv\| \neq \|uu'\|$, without loss of generality, assume $\|uv\| > \|uu'\|$. Then, $\|uv\|>\|vu'\|$. Similarly to the above, $F'=F-uv+vu'$ is a $2$--connected spanning subgraph of $G$ with shorter total edge length than $F$, a contradiction. Thus $\|uv\|=\|uu'\|$. It follows that $uvu'$ is an equilateral triangle. Then, similar argument as the above shows that both $F'=F-uv+vu'$ and $F''=F-uu'+vu'$ are $2$--connected spanning subgraphs of $G$ with the same total edge length as $F$. Property $(\romannumeral3)$ is proved.
\end{proof}

\begin{lemma}\label{lemma3}
Any $2$--connected unit disk graph $G$ has a $2$-connected spanning subgraph $F$ with maximum degree at most five.
\end{lemma}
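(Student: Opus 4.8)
The plan is to start from a 2-MSS $F$ of $G$ (which exists since $G$ is 2-connected) and repeatedly apply the edge swaps of Lemma \ref{lemma2}$(\romannumeral3)$ until no vertex has degree six; since each such swap produces another 2-MSS, we never leave the class of minimum-length 2-connected spanning subgraphs. First I would record the easy upper bound: by Lemma \ref{lemma2}$(\romannumeral2)$ any two edges meeting at a vertex $u$ subtend an angle at least $\pi/3$, so the $d_F(u)$ consecutive angular gaps around $u$ are each at least $\pi/3$ and sum to $2\pi$, giving $d_F(u)\le 6$. It then remains only to rule out degree six.

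The key structural step concerns a vertex $u$ with $d_F(u)=6$. Then all six angular gaps equal $\pi/3$, so the neighbors $v_1,\dots,v_6$ sit at the vertices of a regular hexagon centered at $u$; applying Lemma \ref{lemma2}$(\romannumeral3)$ to consecutive pairs shows all six edges share a common length $\ell$, and each consecutive pair $v_i,v_{i+1}$ forms an equilateral triangle with $u$, whence $\|v_iv_{i+1}\|=\ell\le 1$ so $v_iv_{i+1}\in E(G)$ while $v_iv_{i+1}\notin E(F)$ by Lemma \ref{lemma2}$(\romannumeral1)$. I would then prove that no neighbor of $u$ can itself have degree six: if, say, $v_1$ had degree six, its own neighbors would form a hexagon of edge length $\ell$ (forced by the edge $uv_1$ of length $\ell$), one of whose six directions points at $u$; the two directions at $\pm\pi/3$ from it then land exactly on $v_2$ and $v_6$, forcing $v_1v_2,v_1v_6\in E(F)$ and contradicting Lemma \ref{lemma2}$(\romannumeral1)$. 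Consequently every neighbor of a degree-six vertex has degree at most five.

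Next I would describe the swap precisely: for any chosen hexagon neighbor $v^*$ of $u$, pick an adjacent hexagon vertex $v'$ with $\angle v^*uv'=\pi/3$, delete $uv'$ and add $v^*v'$. By Lemma \ref{lemma2}$(\romannumeral3)$ the result is again a 2-MSS, and the only degrees that change are $d(u)$ (from $6$ down to $5$) and $d(v^*)$ (up by one), the latter being at most five beforehand by the previous step. To force termination I would choose $F$ to minimize the number of degree-six vertices and, among all such, to maximize the largest $x$-coordinate of a degree-six vertex (finitely many spanning subgraphs make this well defined). If a degree-six vertex still exists, take the one $u$ of largest $x$-coordinate and let $v^*$ be its hexagon neighbor of largest $x$-coordinate; since the six hexagon directions are spaced $\pi/3$ apart, $x(v^*)\ge x(u)+\ell\cos(\pi/6)>x(u)$. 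Performing the swap toward $v^*$: if $d(v^*)\le 4$ the number of degree-six vertices strictly decreases, contradicting minimality; if $d(v^*)=5$ the count is unchanged but a degree-six vertex now sits at $x(v^*)>x(u)$, contradicting maximality of the secondary objective. Either way a contradiction results, so $F$ has no degree-six vertex and hence maximum degree at most five.

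The main obstacle is exactly the case in which all six neighbors of the degree-six vertex have degree five: a single swap then cannot lower the count of degree-six vertices, but merely transports the defect to a neighbor. The remedy is the geometric monovariant above, as the hexagon geometry guarantees a neighbor with strictly larger $x$-coordinate, so pushing the defect eastward cannot cycle and must eventually reach a vertex having a neighbor of degree at most four, where the count finally drops. The two computations I would carry out in full are the verification that the swap alters only the two intended degrees, and that the forced hexagon of a putative degree-six neighbor genuinely overlaps the hexagon of $u$.
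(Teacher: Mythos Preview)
Your argument is correct, but it takes a longer road than the paper's. The divergence is at the bound on the degree of a neighbor of a degree-six vertex. You prove only $d_F(v_i)\le 5$, via the overlapping-hexagons observation, and this weaker bound forces you into the secondary $x$-coordinate monovariant to handle the case where a swap merely transports the defect. The paper instead proves the sharper bound $d_F(v_i)\le 4$ by applying Lemma~\ref{lemma2}$(\romannumeral3)$ \emph{twice}: starting from the degree-six vertex $u$, one replaces both $uv_{i-1}$ and $uv_{i+1}$ by $v_iv_{i-1}$ and $v_iv_{i+1}$, obtaining another 2-MSS $F'$ in which $d_{F'}(v_i)=d_F(v_i)+2\le 6$. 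With this in hand, any single swap $F\mapsto F-uv_i+v_iv_{i-1}$ raises $d(v_{i-1})$ from at most $4$ to at most $5$ while lowering $d(u)$ from $6$ to $5$, so the number of degree-six vertices strictly decreases every time and termination is immediate. What the paper's approach buys is brevity and the avoidance of any geometric extremal argument; what your approach buys is a self-contained termination scheme that would still work even if the neighbor-degree bound were only $5$, at the cost of the extra monovariant machinery.
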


\begin{proof}
Let $F$ be a 2-MSS of $G$. By property $(\romannumeral2)$ of Lemma \ref{lemma2}, every node in $F$ has degree at most six.

Suppose $u$ is a node of degree six in $F$, whose neighbors are $u_{0},u_{1},...,u_5$, ordered in a clockwise order. Then $\angle u_{i}uu_{i+1}=\pi/3$ for $i=0,1,...,5$, where ``$+$'' is modulo $6$. We claim that
$$
d_{F}(u_{i})\leq 4 \ \mbox{for} \ i=0,1,...,5.
$$
In fact, by property $(\romannumeral3)$ of Lemma \ref{lemma2}, $F'=(F-uu_{i-1}-uu_{i+1})+u_{i}u_{i-1}+u_{i}u_{i+1}$ is a 2-MSS of $G$. So $d_{F}(u_{i})+2 =d_{F'}(u_{i})\leq 6$, and thus $d_{F}(u_{i})\leq 4$.

Also by property $(\romannumeral3)$ of Lemma \ref{lemma2},  $F'=F-uu_{i}+u_{i}u_{i-1}$ is a 2-MSS of $G$, in which the degree of $u$ is decreased. Notice that $d_{F'}(v)=d_F(v)$ for any $v\in V(G)\setminus\{u,u_{i-1}\}$ and $d_{F'}(u_{i-1})=d_F(u_{i-1})+1\leq 5$. So such an operation results in a 2-MSS of $G$ in which the number of nodes of degree six is strictly decreased. Repeatedly executing such an operation eventually results in a minimum length $2$-connected spanning subgraph of $G$ with maximum degree at most five.
\end{proof}

\subsection{Node--Weighted $k$--Connected Steiner Network in Unit Disk Graphs}\label{subsec3.2}

In this subsection, we present an approximation algorithm for the special minimum node-weighted $k$-connected Steiner network problem in which the terminal set form an $m$-fold dominating set with $m\geq k$.

\begin{definition}[Minimum Node--Weighted $k$--Connected Steiner Network (MNW$k$CSN)]\label{def4}
Given a $k$--connected graph $G = (V ,E)$ with non-negative node weight function $c$
and a terminal node set $T\subseteq V$, MNW$k$CSN is to find a node set $C\subseteq V\setminus T$
with the minimum weight $c(C)=\sum_{v\in C}c(v)$
such that $G[T \cup C]$ is $k$--connected.
\end{definition}

Since $T$ is included in any feasible solution of MNW$k$CSN, we may assume that any node in $T$ has weight zero.

Our algorithm makes use of the {\em subset $k$--connected subgraph problem}. In fact, the subset $k$--connected subgraph problem in \cite{Nutov} has a very general form. For the purpose of this paper, we only use a simplified version whose definition is given as follows.

\begin{definition}[Subset $k$--Connected Subgraph (S$k$CS)]\label{def5}
{\rm Let $G=(V,E)$ be a graph with edge weight function $w$, and let $T\subseteq V$ be a terminal set. A {\em subset $k$--connected subgraph of $G$} is a subgraph $F$ of $G$ such that for any pair of nodes $u,v\in T$, there are at least $k$ internally disjoint $(u,v)$-paths in $F$. The {\em subset $k$--connected subgraph problem} is to find a subset $k$--connected subgraph $F$ of $G$ with the minimum edge weight $w(F)=\sum_{e\in E(F)}w(e)$.}
\end{definition}

Our algorithm for MNW$k$CSN is presented in Algorithm \ref{alg1}.

\begin{algorithm}
\caption{Algorithm for MNW$k$CSN}
Input: A $k$--connected graph $G=(V,E)$ with node cost function $c$, a terminal set $T\subseteq V$ which is an $m$-fold dominating set of $G$ with $m\geq k$, and a $\rho$--approximation algorithm $\mathcal A$ for S$k$CS.

Output: A $k$--connected subgraph $F$ of $G$ containing $T$.

\begin{algorithmic}[1]
\State  Construct an instance $(G,T,w)$ of S$k$CS by assigning edge weight function $w$ by defining  $w(uv)=(c(u)+c(v))/2$ for each edge $uv\in E(G)$.
\State Apply algorithm $\mathcal A$ on $(G,T,w)$ to compute a subset $k$--connected subgraph $F$ of $G$.
\State Output $F$.
\end{algorithmic}\label{alg1}
\end{algorithm}

In general, a feasible solution to S$k$CS might perhaps not be a $k$--connected subgraph, because a $k$--connected subgraph requires {\em every pair} of nodes to be connected through $k$ internally disjoint paths (by Menger's Theorem \cite{Bondy}) instead of merely those pairs of nodes in $T$. However, under the assumption that $T$ is an $m$-fold dominating set of $G$ with $m\geq k$, the output $F$ must be $k$--connected. Suppose this is not true. Let $S$ be a $(k-1)$-separator of $F$. Since every pair of nodes in $T$ are connected by at least $k$ internally disjoint paths in $F$, there exists a connected component of $F-S$, say $R$, such that $T\subseteq V(R)\cup S$. Let $u$ be a node in another connected component of $F-S$. Since $u$ has at least $m\geq k$ neighbors in $T$ and $|S|=k-1<m$, node $u$ has a neighbor in $R$, contradicting that $u$ belongs to a connected component of $F-S$ which is different from $R$. This argument shows that the output of Algorithm \ref{alg1} is indeed a solution to MNW$k$CSN.

Next, we analyze the performance ratio of Algorithm \ref{alg1}.

\begin{lemma}\label{lemma5}
Let $F$ be the output of Algorithm \ref{alg1}. Then $c(V(F))\leq w(E(F))$.
\end{lemma}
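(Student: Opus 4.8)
The plan is to relate the edge weight $w(E(F))$ to the node cost $c(V(F))$ through an elementary degree-counting identity, and then invoke the $k$-connectivity of $F$ to bound each degree from below. The whole argument hinges on the fact that the output $F$ is a $k$-block, and therefore $k$-connected.

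First I would rewrite the edge weight of $F$ by expanding the definition $w(uv)=(c(u)+c(v))/2$ set in Step~1 of Algorithm~\ref{alg1}. Summing over all edges of $F$ and then regrouping the contributions by endpoint, each node $v$ is charged $c(v)/2$ once for every edge of $F$ incident to it, so that
\[
w(E(F))=\sum_{uv\in E(F)}\frac{c(u)+c(v)}{2}=\frac{1}{2}\sum_{v\in V(F)}c(v)\,d_F(v),
\]
where $d_F(v)$ denotes the degree of $v$ in $F$. This is just the standard ``handshake'' regrouping, now weighted by the node costs.

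The key observation is that $F$ is a $k$-block of $F_0$, hence $k$-connected: by the discussion preceding the lemma it is either $K_{k+1}$ or a $(k+1)$-connected graph, and in both cases every node has degree at least $k$, i.e. $d_F(v)\geq k$ for all $v\in V(F)$. Since the cost function $c$ is non-negative (Definition~\ref{def4}), I may substitute this bound term by term into the identity above to obtain
\[
w(E(F))\geq\frac{1}{2}\sum_{v\in V(F)}c(v)\cdot k=\frac{k}{2}\,c(V(F)),
\]
and rearranging yields $c(V(F))\leq\frac{2}{k}w(E(F))$, which is the claim.

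There is no genuine obstacle here; once the two ingredients are assembled the conclusion is a one-line computation. The only point meriting a moment of care is the uniform minimum-degree bound $d_F(v)\geq k$: it rests entirely on $F$ being a $k$-block (and hence $k$-connected) rather than merely a feasible S$k$CS solution, which is precisely why Step~3 of the algorithm extracts a $k$-block containing $T$ instead of returning $F_0$ directly.
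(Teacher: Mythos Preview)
Your proof is correct and follows essentially the same approach as the paper: expand $w(E(F))$ via the definition of $w$, regroup by vertex using the handshake identity, and use $d_F(v)\ge k$ from $k$-connectivity of $F$ together with nonnegativity of $c$. The paper's version is slightly terser (it simply asserts that $F$ is $k$-connected rather than elaborating on the $k$-block extraction), but the argument is identical.
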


\begin{proof}
Notice that although the subgraph induced by node set $V(F)$ is $k$-connected, the subgraph $F$ itself, which is an edge induced subgraph, is not so. However, we must have $d_F(u)\geq 2$ for any node $u\in V(F)$. In fact, if $u\in R$, then $d_F(u)\geq k$. If $u\in V(F)\setminus R$, then the reason why $u$ is added is for connection. Since any node with degree 1 in $F$ cannot play such a role, we have $d_F(u)\geq 2$. It follows that
\begin{align*}
w(E(F)) & =\sum_{uv\in E(F)}\frac{c(u)+c(v)}{2}\\
& =\sum_{u\in V(F)}c(u)\cdot\frac{d_F(u)}{2}\\
& \geq \sum_{u\in V(F)}c(u)\\
& = c(V(F)).
\end{align*}
The lemma is proved.
\end{proof}

\begin{theorem}\label{th1}
Under the assumption that $m\geq k$, Algorithm \ref{alg1} computes a solution to MNW$k$CSN on unit disk graph with performance ratio $2.5k\rho$ for $k\geq 3$ and performance ratio $2.5\rho$ for $k =2$, where $\rho$ is the performance ratio for S$k$CS.
\end{theorem}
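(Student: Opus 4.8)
The plan is to chain four bounds that carry the cost of the output $F$ back to the optimum of MNW$k$CSN, with the geometric sparsification lemmas supplying the one step that keeps the ratio constant. Write $C^{*}$ for an optimal MNW$k$CSN solution and $D^{*}=G[T\cup C^{*}]$ for the corresponding $k$--connected subgraph; since terminals have zero weight, $c(V(D^{*}))=c(C^{*})=\mathrm{opt}$. Starting from the output, Lemma \ref{lemma5} gives $c(V(F))\le\frac{2}{k}w(E(F))$; because $F$ is a $k$--block (hence a subgraph) of $F_{0}$ and $w$ is nonnegative, $w(E(F))\le w(E(F_{0}))$; and because $F_{0}$ is produced by the $\rho$--approximation $\mathcal A$ for S$k$CS, $w(E(F_{0}))\le\rho\cdot\mathrm{opt}_{\mathrm{S}k\mathrm{CS}}$, where $\mathrm{opt}_{\mathrm{S}k\mathrm{CS}}$ denotes the minimum edge weight of a subset $k$--connected subgraph.

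First I would build a good feasible solution to the S$k$CS instance out of $D^{*}$ in order to upper bound $\mathrm{opt}_{\mathrm{S}k\mathrm{CS}}$. The naive choice $D^{*}$ itself is useless: in $w(E(D^{*}))=\sum_{u}c(u)d_{D^{*}}(u)/2$ the degrees can be arbitrarily large in a unit disk graph, so this would not give a constant. This is the crux of the argument, and it is exactly where the geometric lemmas enter. For $k\ge 3$ I would apply Lemma \ref{lemma1} to the $k$--connected unit disk graph $D^{*}$ to obtain a $k$--MSS $H$ of $D^{*}$; it spans $V(D^{*})=T\cup C^{*}$, is $k$--connected, and has maximum degree at most $5k$. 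Being $k$--connected and containing $T$, $H$ is a feasible subset $k$--connected subgraph, so $\mathrm{opt}_{\mathrm{S}k\mathrm{CS}}\le w(E(H))$, and the degree bound yields
\[
w(E(H))=\sum_{u\in V(H)}\frac{c(u)\,d_{H}(u)}{2}\le\frac{5k}{2}\,c(V(H))=\frac{5k}{2}\,c(C^{*}).
\]

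Chaining the four bounds then gives $c(V(F))\le\frac{2}{k}\cdot\rho\cdot\frac{5k}{2}c(C^{*})=5\rho\cdot c(C^{*})$, proving the $k\ge 3$ case; note that the factor $5k$ from the sparsifier cancels the $2/k$ from Lemma \ref{lemma5}, which is precisely what makes the ratio constant. For $k=2$ the only change is to replace Lemma \ref{lemma1} by Lemma \ref{lemma3}, which produces a $2$--connected spanning subgraph $H$ of $D^{*}$ with maximum degree at most $5$ (rather than $5k=10$); then $w(E(H))\le\frac{5}{2}c(C^{*})$ and, since $\frac{2}{k}=1$, the same chain gives $c(V(F))\le 1\cdot\rho\cdot\frac{5}{2}c(C^{*})=2.5\rho\cdot c(C^{*})$. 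I expect the main obstacle to be the sparsification step just described --- justifying that one may compare against a bounded-degree $k$--connected spanning subgraph of the optimum rather than the optimum's induced graph --- together with the (already noted) point that the $k$--block $F$ is genuinely $k$--connected and contains all of $T$, so that Lemma \ref{lemma5} applies to it.
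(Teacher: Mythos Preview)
Your proposal is correct and follows essentially the same approach as the paper's proof: apply Lemma~\ref{lemma5} to pass from node cost to edge weight, use the $\rho$--approximation guarantee of $\mathcal A$, then compare against a bounded-degree $k$--connected spanning subgraph of the optimum supplied by Lemma~\ref{lemma1} (for $k\ge 3$) or Lemma~\ref{lemma3} (for $k=2$). You are in fact slightly more explicit than the paper in isolating the step $w(E(F))\le w(E(F_0))$ before invoking the approximation guarantee.
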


\begin{proof}
Let $F$ be the output of Algorithm \ref{alg1}, $OPT_{MNWkCSN}$ be an optimal solution to MNW$k$CSN, and $OPT_{SkCS}$ be an optimal solution to the S$k$CS problem constructed in Line 1 of Algorithm \ref{alg1}. Let $\widetilde{F}$ be a minimum length $k$--connected spanning subgraph of $OPT_{MNWkCSN}$. Denote by $\Delta_{\widetilde{F}}$ the maximum degree of $\widetilde{F}$. Clearly, $\widetilde{F}$ is a feasible solution to S$k$CS. Then by Lemma \ref{lemma5},
\begin{align*}
c(V(F)) & \leq w(E(F))\\
& \leq \rho w(E(OPT_{SkCS}))\\
& \leq \rho w(E(\widetilde{F}))\\
& = \rho \sum_{uv\in E(\widetilde{F})}\frac{c(u)+c(v)}{2} \\
& = \frac{\rho}{2}\sum_{u \in V(\widetilde{F})}c(u)d_{\widetilde{F}}(u)\\
& \leq\Delta_{\widetilde{F}}\frac{\rho}{2}\sum_{u\in V(\widetilde{F})}c(u)\\
& =\Delta_{\widetilde{F}}\frac{\rho}{2} \sum_{u\in V(OPT_{MNWkCSN})}c(u) \\
& =\Delta_{\widetilde{F}}\frac{\rho}{2} c(OPT_{MNWkCSN})
\end{align*}
By Lemma \ref{lemma2} and Lemma \ref{lemma3}, $OPT_{MNWkCSN}$ has a minimum length $k$--connected spanning subgraph  $\widetilde{F}$ with maximum degree at most 5k when $k\geq 3$ and at most 5 when $k=2$. The performance ratios follow.
\end{proof}

In \cite{Nutov}, Nutov gave an $O(k^2\ln k)$-approximation algorithm for S$k$CS. For $k=2$, the S$2$CS problem is a special case of the $\{0,1,2\}$-Steiner network problem for which Fleischer gave a $2$-approximation algorithm in \cite{Fleischer}. Hence the MNW$k$CSN problem on unit disk graph admits constant approximation, the performance ratio of which is $O(k^3\log k)$ for $k\geq 3$ and $5$ for $k=2$. 

\subsection{Algorithm for $(k,m)$--MWCDS}\label{subsec3.3}
The algorithm for $(k,m)$--MWCDS is presented in Algorithm \ref{alg2}. Notice that for $m\geq k$, a graph $G$ has a $(k,m)$-CDS if and only if $G$ is $k$-connected. The if part is obvious since the node set of a $k$-connected graph is a trivial $(k,m)$-CDS. To see the only if part, suppose $D$ is a $(k,m)$-CDS of $G$. If $G$ has a separator $S$ with $|S|\leq k-1$, then $D$ must be completely contained in an $S$-component of $G$, and any node outside of this $S$-component cannot have at least $m\geq k$ neighbors in $D$. So, $G$ is $k$-connected. In the following, we assume that the original unit disk graph $G$ is $k$-connected.

\begin{algorithm}
\caption{Algorithm for $(k,m)$--MWCDS, where $m\geq k$.}
Input: A $k$-connected unit disk graph $G=(V,E)$, an $\alpha$--approximation algorithm $\mathcal A$ for minimum weight $m$--fold dominating set and a $\gamma$--approximation algorithm $\mathcal B$ for MNW$k$CSN.

Output: A $(k,m)$--CDS for $G=(V,E)$.

\begin{algorithmic}[1]
\State  Apply algorithm $\mathcal A$ to compute an $m$--fold dominating set $D$ of $G$.
\State  Reweigh nodes in $D$ to have weight zero. Then apply algorithm $\mathcal B$ to compute a $k$--connected subgraph $F$ of $G$ on terminal set $D$.
\State  Output $V(F)$.
\end{algorithmic}\label{alg2}
\end{algorithm}

To analyze the performance ratio of Algorithm \ref{alg2}, we need the following lemma which is well known in graph theory.

\begin{lemma}[\cite{Bondy}]\label{lem14-6-9-1}
Suppose $G_1$ is a $k$-connected graph and
$G_2$ is obtained from $G_1$ by adding a new
node $u$ and joining $u$ to at least $k$ nodes
of $G_1$. Then $G_2$ is also $k$-connected.
\end{lemma}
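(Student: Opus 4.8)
The plan is to verify the two defining conditions of $k$-connectivity directly for $G_2$: that $|V(G_2)|\geq k+1$, and that $G_2-S$ is connected for every vertex set $S\subseteq V(G_2)$ with $|S|\leq k-1$. The cardinality condition is immediate, since $G_1$ being $k$-connected forces $|V(G_1)|\geq k+1$, whence $|V(G_2)|=|V(G_1)|+1\geq k+2$. So the substance lies in the no-small-separator condition.

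For that condition I would fix an arbitrary $S\subseteq V(G_2)$ with $|S|\leq k-1$ and split into two cases according to whether the new node $u$ lies in $S$. If $u\in S$, then $S\setminus\{u\}\subseteq V(G_1)$ has size at most $k-2<k$, so $G_1-(S\setminus\{u\})$ is connected because $G_1$ is $k$-connected; but this graph is exactly $G_2-S$, and we are done. If $u\notin S$, then $S\subseteq V(G_1)$ with $|S|\leq k-1$, so again $G_1-S$ is connected, and it remains only to attach $u$ to this connected piece.

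The \emph{crux} of the argument, and the one place where the hypothesis on the degree of $u$ is used, is this final attachment step. By assumption $u$ has at least $k$ neighbors in $G_1$, while $|S|\leq k-1$, so by a pigeonhole count at least one neighbor $w$ of $u$ survives in $G_1-S$. The edge $uw$ then joins $u$ to the connected graph $G_1-S$ inside $G_2-S$, showing that $G_2-S$ is connected. Since $S$ was arbitrary, $G_2$ has no separator of size less than $k$, and together with the cardinality bound this establishes that $G_2$ is $k$-connected.

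I expect no genuine obstacle here, as this is a standard fact of graph connectivity; the only points requiring care are the bookkeeping of cardinalities (ensuring $|V(G_2)|\geq k+1$, and noting that deleting $u$ itself leaves at most $k-2$ further deletions available) and the counting step that guarantees a surviving neighbor of $u$. An alternative route would phrase everything through Menger's theorem by exhibiting $k$ internally disjoint paths between each pair of vertices, but the separator formulation above is cleaner and avoids having to construct paths explicitly.
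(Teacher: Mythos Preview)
Your proof is correct and is the standard textbook argument. The paper does not actually prove this lemma; it merely cites it from Bondy--Murty, so there is nothing to compare your approach against.
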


\begin{theorem}\label{th2}
Algorithm \ref{alg2} has performance ratio $\alpha+\gamma$.
\end{theorem}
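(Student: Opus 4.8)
The plan is to prove two things: that the output $V(F)$ is a genuine $(k,m)$-CDS, and that its weight is at most $(\alpha+\gamma)\,c(OPT)$, where $OPT$ denotes a minimum weight $(k,m)$-CDS of $G$. I would dispose of feasibility first, since it is routine. For condition $(b)$ of Definition \ref{def1}, the MNW$k$CSN step returns (via $\mathcal A$) a $k$-connected subgraph $F$ containing the terminal set $D$; because $G[V(F)]$ contains $F$ as a spanning subgraph and adding edges preserves $k$-connectivity, $G[V(F)]$ is $k$-connected. For condition $(a)$, I would use $D\subseteq V(F)$ together with the fact that the dominating step produces an $m$-fold dominating set $D$: any node $v\in V\setminus V(F)$ lies in $V\setminus D$, hence has at least $m$ neighbors in $D\subseteq V(F)$.

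For the weight bound I would split $c(V(F))=c(D)+c(V(F)\setminus D)$ and estimate the two summands separately. The first is easy: since $OPT$ satisfies condition $(a)$, it is itself a feasible $m$-fold dominating set, so the minimum weight $m$-fold dominating set has weight at most $c(OPT)$; the $\alpha$-approximation used in the dominating step then gives $c(D)\leq\alpha\,c(OPT)$.

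The crux is bounding $c(V(F)\setminus D)$, and for this I would exhibit a cheap feasible solution to the MNW$k$CSN instance on terminal set $D$. The natural candidate is $C:=OPT\setminus D$, and the key claim is that $G[D\cup OPT]$ is $k$-connected. To establish it I would start from $G[OPT]$, which is $k$-connected because $OPT$ satisfies condition $(b)$, and then insert the nodes of $D\setminus OPT$ one at a time. Each such node $v$ lies in $V\setminus OPT$, so condition $(a)$ applied to $OPT$ gives $v$ at least $m$ neighbors in $OPT$; since $m\geq k$, node $v$ is adjacent to at least $k$ vertices of the current (already $k$-connected) graph, and Lemma \ref{lem14-6-9-1} keeps it $k$-connected after the insertion. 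Thus $G[D\cup OPT]=G[D\cup(OPT\setminus D)]$ is $k$-connected, so $C=OPT\setminus D\subseteq V\setminus D$ is feasible for MNW$k$CSN. Because $D$ has been reweighed to zero, the optimum of this instance costs at most $c(OPT\setminus D)\leq c(OPT)$, and the $\gamma$-approximation $\mathcal A$ then yields $c(V(F)\setminus D)\leq\gamma\,c(OPT)$. Adding the two estimates gives $c(V(F))\leq(\alpha+\gamma)\,c(OPT)$.

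I expect the main obstacle to be exactly the feasibility claim that $G[D\cup OPT]$ is $k$-connected; this is where the hypothesis $m\geq k$ enters essentially. The delicate point in the one-at-a-time insertion is that each inserted node must be adjacent to at least $k$ vertices of a graph already \emph{known} to be $k$-connected. This holds because the $\geq m\geq k$ neighbors guaranteed by the domination property all lie in the original $OPT$, which is present from the start of the process; so the hypotheses of Lemma \ref{lem14-6-9-1} are satisfied at every step regardless of the insertion order.
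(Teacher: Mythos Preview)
Your proposal is correct and follows essentially the same route as the paper: split $c(V(F))=c(D)+c(V(F)\setminus D)$, bound the first term via the $\alpha$-approximation (since $OPT$ is itself an $m$-fold dominating set), and bound the second by showing that $OPT\cup D$ is feasible for the MNW$k$CSN instance via Lemma~\ref{lem14-6-9-1} and the hypothesis $m\geq k$. If anything, you are more careful than the paper, which omits the feasibility check and the one-at-a-time insertion detail.
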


\begin{proof}
Let $OPT$ be an optimal solution to $(k,m)$--MWCDS and $F_{min}$ be an optimal solution to MNW$k$CSN on terminal set $D$. Since $m\geq k$ and $OPT$ is $k$--connected, we see from Lemma \ref{lem14-6-9-1} that the induced subgraph $G[OPT\cup D]$ is $k$--connected, and thus $OPT\cup D$ is a feasible solution to MNW$k$CSN on terminal set $D$. Therefore, $c(V(F_{min})\setminus D)\leq c(OPT)$. Then by Theorem \ref{th1},
\begin{align*}
c(V(F)) & =c(D)+c(V(F) \setminus D)   \\
& \leq \alpha c(OPT)+\gamma c(V(F_{min})\setminus D)  \\
& \leq (\alpha+\gamma)c(OPT).
\end{align*}
The performance ratio is proved.
\end{proof}

If $\mathcal A$ is taken to be Algorithm \ref{alg1}, then we have the following performance ratio:

\begin{theorem}\label{th3}
For $m\geq k$, the $(k,m)$--MWCDS problem on unit disk graph has an $(\alpha+2.5k\rho)$--approximation when $k\geq 3$ and an
$(\alpha+2.5\rho)$--approximation when $k=2$, where $\alpha$ is the performance ratio
for minimum weight $m$--fold dominating set problem,
$\rho$ is the performance ratio for subset $k$--connected subgraph problem.
\end{theorem}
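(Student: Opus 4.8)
The plan is to obtain Theorem \ref{th3} as an immediate specialization of Theorem \ref{th2}, feeding into it the concrete approximation guarantee of Algorithm \ref{alg1} that was established in Theorem \ref{th1}. Theorem \ref{th2} already shows that Algorithm \ref{alg2}, when equipped with a $\gamma$-approximation algorithm $\mathcal A$ for MNW$k$CSN, produces a $(k,m)$-CDS of weight at most $(\alpha+\gamma)\,c(OPT)$, where $OPT$ is an optimal $(k,m)$-CDS and $\alpha$ is the ratio of the $m$-fold dominating set subroutine. So the only remaining task is to pin down the value of $\gamma$ for the particular choice $\mathcal A=$ Algorithm \ref{alg1}.

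First I would invoke Theorem \ref{th1}, which asserts that on unit disk graphs Algorithm \ref{alg1} solves MNW$k$CSN with ratio $5\rho$ when $k\geq 3$ and $2.5\rho$ when $k=2$, where $\rho$ is the S$k$CS ratio. Taking $\mathcal A$ to be Algorithm \ref{alg1} inside Algorithm \ref{alg2} therefore sets $\gamma=5\rho$ (respectively $\gamma=2.5\rho$). Substituting these two values of $\gamma$ into the bound $\alpha+\gamma$ of Theorem \ref{th2} yields exactly the two claimed ratios, $\alpha+5\rho$ for $k\geq 3$ and $\alpha+2.5\rho$ for $k=2$, which finishes the argument.

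I expect essentially no new obstacle at this stage: the statement is a bookkeeping combination of two already-proved results. The genuinely delicate content has been discharged earlier---namely the geometric degree bounds of Lemmas \ref{lemma1}--\ref{lemma3} (the maximum degree of a $k$-MSS of a unit disk graph is at most $5k$, and at most $5$ when $k=2$), which are what convert the edge-weighted S$k$CS guarantee into the node-weighted MNW$k$CSN guarantee in the proof of Theorem \ref{th1}, together with the feasibility step in Theorem \ref{th2} showing that $OPT\cup D$ is $k$-connected via Lemma \ref{lem14-6-9-1} under the hypothesis $m\geq k$. The single point meriting a line of care is to confirm that the $\gamma$ supplied by Algorithm \ref{alg1} is precisely the MNW$k$CSN ratio that Theorem \ref{th2} takes as input, so that the hypotheses of Theorem \ref{th2} are met verbatim; once that matching is noted, the arithmetic is immediate.
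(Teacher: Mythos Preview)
Your proposal is correct and matches the paper's approach exactly: the paper presents Theorem~\ref{th3} as the immediate consequence of plugging Algorithm~\ref{alg1} (with the ratios $5\rho$ and $2.5\rho$ from Theorem~\ref{th1}) into the $(\alpha+\gamma)$ bound of Theorem~\ref{th2}. There is no additional content beyond this substitution, and your remarks about where the real work lies (Lemmas~\ref{lemma1}--\ref{lemma3} and Lemma~\ref{lem14-6-9-1}) are accurate.
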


Since the best known $\alpha$ is $O(1)$ \cite{Fukunage}. The best known $\rho=O(k^{2}\log k)$  \cite{Nutov}
 when $k\geq 3$ and $\rho=2$ \cite{Fleischer} when $k=2$, we have the following corollary:

\begin{corollary}
For $m\geq k$, the $(k,m)$-MWCDS problem on unit disk graph has a constant approximation algorithm.
\end{corollary}

\section{Conclusion}\label{sec3}

In this paper, we designed a polynomial-time constant-approximation algorithm for the minimum weight $k$--connected $m$-fold dominating set problem ($(k,m)$-MWCDS) in unit disk graphs, where $m\geq k$.
Prior to this work, constant approximation algorithms were known for $k=1$ with weight and for $2\leq k\leq 3$ without weight. However, for $k\geq 4$, whether $(k,m)$-MCDS on unit disk graph admits a constant approximation is a long standing open problem. We answer the problem confirmatively for any fixed integer $k$, even considering weight.

Our algorithm is based on a constant approximation algorithm for the minimum node-weighted Steiner network problem on unit disk graph under the assumption that the terminal set is a $m$-fold dominating set with $m\geq k$. A key result to the performance ratio is a geometric result, saying that every $k$--connected unit disk graph has a $k$--connected spanning subgraph whose maximum degree is upper bounded by a constant. Is it possible to design a better approximation algorithm for the minimum node weighted $k$--connected Steiner network problem directly? If this can be done, then the performance ratio for $(k,m)$-MWCDS can be improved accordingly.

It should be remarked that after this paper was published in \cite{Shi2017}, Zeev Nutov pointed out a flaw in the proof of Lemma \ref{lemma5}, which, in its original version, says that $c(V(F))\leq \frac{2}{k}w(E(F))$, while it can be seen from this updated version that we can only obtain $c(V(F))\leq w(E(F))$. It should also be pointed out that Takuro Fukunaga obtained a constant ratio for $(k,m)$-CDS almost at the same time \cite{Fukunage}. Our algorithm makes use of Nutov's algorithm for S$k$CS, while Fukunaga's algorithm opens such a black-box by using primal-dual method directly. After fixing the above flaw, our ratio is larger than Fukunaga's ratio by a factor of $k/2$.

\section*{Acknowledgment}
This research is supported by NSFC (61222201, 11531011).

\end{document}